\documentclass[runningheads]{llncs}
\usepackage[T1]{fontenc}
\usepackage{graphicx}
\usepackage[noend]{algpseudocode}
\usepackage{shuffle}

\usepackage[nounderscore]{syntax}
\usepackage{csquotes}

\usepackage{caption}
\usepackage{subcaption}

\usepackage{booktabs}
\usepackage{marvosym}

\usepackage{graphicx}
\usepackage{amsmath}

\usepackage{amsthm,amssymb,mathtools}
\usepackage{wasysym}

\newtheorem*{proposition*}{Proposition}
\newtheorem*{theorem*}{Theorem}
\newtheorem*{lemma*}{Lemma}

\theoremstyle{definition}

\theoremstyle{remark}

\usepackage{graphics}
\usepackage[dvipsnames]{xcolor}
\usepackage{xspace}
\usepackage{tikz}
\usepackage{utfsym}
\usepackage{color}
\usepackage{relsize}

\newcommand{\ster}[1]{{*_{\mathlarger{#1}}}}

\newcommand{\N}{\ensuremath{\mathbb{N}}}
\newcommand{\Lb}{\ensuremath{\mathfrak{L}}}
\newcommand{\Id}{\ensuremath{\mathfrak{I}}}
\newcommand{\Lang}{\ensuremath{\mathcal{L}}}

\newcommand{\Sym}{\ensuremath{\mathcal{S}}}

\newcommand{\C}{\ensuremath{\mathcal{C}}}

\newcommand{\di}{\ensuremath{\hat\imath}}

\newcommand{\conc}[1]{\left\Vert{\smash{#1}}\right\Vert}
\newcommand{\concr}{\conc{\nobreak\hspace{.06em}}}
\newcommand{\pconc}[1]{\text{\textnormal{\textbrokenbar\textbrokenbar}}\smash{#1}\text{\textnormal{\textbrokenbar\textbrokenbar}}}
\newcommand{\pconcr}{\pconc{\nobreak\hspace{.12em}}}
\newcommand{\size}[1]{\left\vert{#1}\right\vert}

\usepackage{amssymb}
\usepackage{scalerel, tikz}
\usetikzlibrary{arrows.meta, bending}
\usetikzlibrary{arrows}

\newsavebox{\veeto}
\newsavebox{\veefrom}

\sbox{\veeto}  {\tikz{\draw[-{triangle 60}, line width=2pt, line join=round, line cap=round] (0,1)to[out=-10, in=90](.5,0)to[out=90, in=190](1,1)}}
\sbox{\veefrom}{\tikz{\draw[{triangle 60}-, line width=2pt, line join=round, line cap=round] (0,1)to[out=-10, in=90](.5,0)to[out=90, in=190](1,1)}}

\newcommand{\vto}{\mathbin{\scalerel*{\usebox{\veeto}}{t}}}
\newcommand{\vfrom}{\mathbin{\scalerel*{\usebox{\veefrom}}{t}}}

\newcommand{\wildcard}[0]{\mbox{\ensuremath{\bullet}}}
\newcommand{\altern}[0]{\ensuremath{\curlyvee}}

\newcommand{\alternleft}{\vfrom}
\newcommand{\alternright}{\vto}

\makeatletter
\newcommand*{\closeindex}[1]{_{\mkern-4.5mu#1}}
\let\sigmareal\altern
\DeclareRobustCommand{\altern}{\sigmareal\@ifnextchar_{\expandafter\closeindex\@gobble}{}}
\makeatother

\newcommand{\pto}{\ensuremath{\dashrightarrow}}

\newcommand{\DAseq}[3]{\ensuremath{%
    \Sym%
    \if\relax\detokenize{#1}\relax\else{,\,#1}\fi%
    ~|~%
    \Gamma%
    \if\relax\detokenize{#2}\relax\else{,\,#2}\fi%
    \Rightarrow%
    \Delta%
    \if\relax\detokenize{#3}\relax\else{,\,#3}\fi%
}}

\allowdisplaybreaks

\usepackage{stmaryrd}

\makeatletter
\RequirePackage[bookmarks,unicode,colorlinks=true]{hyperref}%
   \def\@citecolor{blue}%
   \def\@urlcolor{blue}%
   \def\@linkcolor{blue}%

\def\orcidID#1{\href{http://orcid.org/#1}{\protect\raisebox{-1.25pt}{\protect\includegraphics{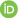}}}}
\makeatother

\begin{document}
%
\title{A Denotational Semantics for Synchronized Regular Expressions\thanks{Supported by the ATHENE project “ABC --- Control-Flow Fingerprinting for Malware Detection”}\\(extended version)}
\titlerunning{A Denotational Semantics for Synchronized Regular Expressions}
%
\author{Lukas Grätz\orcidID{0000-0002-9716-3142}}
%
%
\institute{Technical University of Darmstadt, Darmstadt, Germany\\
\email{lukas.graetz@tu-darmstadt.de}}
\maketitle              
\begin{abstract}
\emph{Pure} and \emph{full synchronized regular expressions} (\emph{pure sregex} and \emph{full sregex}) extend regular expressions by adding labels to the operators (Kleene star and alternation). Operators within the same scope are synchronized if they have the same label.
We show that a regular expression with backreferences (known from practical regex engines) can be translated into a full sregex.
Previous work did not consider synchronized alternations or nested star synchronization. Within the same scope, stars with the same label match the same number of iterations, alternations with the same label match the same choice. The scope may depend on the current iteration of an outer star.
Pure sregexes restrict synchronization to the local scope, while full sregexes also allow synchronization across iterations.
In contrast to operational semantics in previous work on backreferences, we present a denotational semantics, where \emph{concretizations} act similarly to valuations in logic.
As with backreferences, we show that matching a word is NP-complete.
\emph{Pure} and \emph{full synchronized languages} are considered and we show that they are closed under concatenation, union and star.
By a pumping lemma, they are not closed under intersection.
Relationships to other language models are discussed.

\keywords{Extended Regular Expressions \and Semantics \and Pumping Lemma \and Closure Properties}
\end{abstract}

\section{Introduction}

Regular languages generated by \emph{regular expressions} \cite{RegexKleene} are the foundation of the Chomsky hierarchy for formal languages. As such, they are used by a \emph{lexer} within a compiler, like in~\cite{RegexLexer}. On the other hand, \emph{practical regexes} (or \emph{extended regexes}) are used for text processing, namely pattern matching and search-and-replace. Practical regex engines can be found in text editors and many programming languages' standard libraries.
They are also much more powerful than regular languages: for example, perlre supports \emph{backreferences}, \emph{(non-greedy) quantifiers}, \emph{lookaround assertions} and \emph{recursion} \cite{perlre}.
The trade-off is worse complexity: word membership testing for practical regex is often NP-hard \cite{Angluin,HandbookFindingPatterns,PerlNP,OnExtendedRegex}.
When worst-case run-time matters for security (e.g.\ ReDoS attacks), applications should avoid practical regex to allow running Thompson's DFA algorithm \cite{Thompson,ReDoS,RegexCanBeSimple,RegexLinguaFranca}.
Nevertheless, text processing applications seem to benefit from the increased expressivity, as practical regexes are popular.

In this paper, we explore an extension of regular expressions using \emph{synchronized} operators \cite{SRE}.
As with backreferences, word membership is NP-complete.
Our approach is to add labels $l\in\Lb$ to the operators, so \emph{alternations $\altern$} (in the literature $+$, $\cup$, $\lor$, or $|$) and \emph{Kleene stars $^*$} become $\altern_l$ and $^{\ster{l}}$.
Synchronization is done for two operators with the same label.
For example, $(\mathtt{a}\altern_{1}\mathtt{b})\mathtt{c}(\mathtt{d}\altern_1\mathtt{e})$ with the label $1$ matches only $\mathtt{acd}$ and $\mathtt{bce}$, while $(\mathtt{a}\altern_{1}\mathtt{b})\mathtt{c}(\mathtt{d}\altern_2\mathtt{e})$ with labels $1$ and $2$ corresponds to $(\mathtt{a}\altern\mathtt{b})\mathtt{c}(\mathtt{d}\altern\mathtt{e})$.
Previous work \cite{SRE} did not cover synchronized alternations $\altern_l$ or synchronized inner stars as in $(\mathtt{a}^\ster{1}\mathtt{bc}^\ster{1})^\ster{2}$.

There are some features even context-free grammars cannot generate:
In the \emph{Handbook of Formal Languages} \cite{Dassow1997}, three basic features of natural languages are listed; also compare~\cite{CharacterisingRegEx}.
All three can be modeled with synchronized regular expressions, where the \emph{wildcard} (a character class) is an abbreviation $\wildcard_{l} \equiv \mathtt{a} \altern_{\mathtt{a},l} \mathtt{b} \altern_{\mathtt{b},l} \dotsc \mathtt{z} \altern_{\mathtt{z},l} \emptyset$ for $\Sigma = \{\mathtt{a},\mathtt{b},\dotsc, \mathtt{z}\}$ and a label $l$:
\begin{enumerate}
\item For \emph{reduplications} like $\{ww\mid w\in\Sigma^*\}$, we can use ${\wildcard_{1}}^{\ster{2}}{\wildcard_{1}}^{\ster{2}}$.
\item For \emph{multiple agreements} like $\{\mathtt{a}^n \mathtt{b}^n \mathtt{c}^n\mid n\in\N\}$, we can use
$\mathtt{a}^{\ster{1}} \mathtt{b}^{\ster{1}} \mathtt{c}^{\ster{1}}$.
\item For \emph{crossed agreements} like $\{\mathtt{a}^n \mathtt{b}^m \mathtt{c}^n \mathtt{d}^m\mid n,m\in\N\}$, we can use
$\mathtt{a}^{\ster{1}} \mathtt{b}^{\ster{2}} \mathtt{c}^{\ster{1}} \mathtt{d}^{\ster{2}}$.
\end{enumerate}

An important aspect is the treatment of operators nested inside a star. The regular expression $(\mathtt{ab}^{*})^*$ generates the language $\{\mathtt{ab}^{m_1}\cdots\mathtt{ab}^{m_n}\mid n\in\N, m_1,\dotsc,m_n\in\N\}$ and the synchronized variant $(\mathtt{ab}^{\ster{1}})^{\ster{2}}$ should generate the same language. Hence, $^{\ster{1}}$ stands for a different number of iterations when in different iterations of $^{\ster{2}}$. Thus, each iteration forms a different context to evaluate the operators inside the subexpression. In fact, $L_1 = \{(\mathtt{ab}^{m})^n\mid n\in\N, m\in\N\}$ is not generated by a \emph{pure synchronized regular expression (pure sregex)}, which we show using a new \emph{pumping lemma}.
We also introduce \emph{full synchronized regular expressions (full sregexes)}:
by adding one arrow $\uparrow$ to a subexpression, the inner operators become relative to the next outer scope (with multiple arrows $\uparrow\cdots\uparrow$ we can go out of multiple scopes).
Thus, $(\mathtt{a}(\mathtt{b}^{\ster{1}})_\uparrow)^{\ster{2}}$ generates $L_1$.

The paper is structured as follows: Section~\ref{sec:defs} defines the syntax of pure sregexes. The semantics is in Section~\ref{sec:concretization}, where we introduce (partial) \emph{concretizations}. We get a pumping lemma for proving that a language is not pure synchronized.
In Section~\ref{sec:extended}, we introduce syntax and semantics of full sregex.
In Section~\ref{sec:symregCompl}, we show that matching a word $w$ with an expression $s$ is NP-complete.
In Section~\ref{sec:close}, we present closure properties of synchronized languages.
In Section~\ref{sec:expressive}, we compare pure/full synchronized languages with other classes such as context-free grammars and practical regexes with backreferences. The conclusion is in Section~\ref{sec:conc}.

\section{Syntax} \label{sec:defs}

First, we describe the set of \emph{labels}. The set of labels needs to be infinite to accommodate sregexes of arbitrary size. Often, we use integers $\N$ for the set of labels $\Lb$, but the definition leaves it open to use a different set:

\begin{definition}{}\label{def:symindex}
Given an alphabet $\Sigma$ and a set of labels $\Lb$, \emph{pure synchronized regular expressions (pure sregexes)} are inductively defined as all characters $s\in\Sigma$, the \emph{empty word~$\epsilon$}, the \emph{empty language~$\emptyset$}, 
\emph{concatenations $rs$},
labeled \emph{alternations $r \altern_l s$},
and labeled \emph{Kleene stars $r^{\ster{l}}$}
for all pure sregexes $r$ and $s$ and all labels $l\in\Lb$.
Alternation is left-associative, and the star has the strongest binding, followed by concatenation, followed by alternation.
\end{definition}

Note that concatenation $rs$, the empty word $\epsilon$ and the empty language $\emptyset$ come without a label: only alternations and Kleene stars have labels, as they have two or more options. The empty language $\emptyset$ is a regular language that is not recognized using regular expressions without $\emptyset$. Technically, the empty word $\epsilon$ could be defined as an abbreviation $\epsilon \equiv \emptyset^\ster{l}$. Because of left-associativity, we have ${\alpha_1\altern_{l_1}\alpha_2\altern_{l_2}\dotsc \altern_{l_{k-1}} \alpha_k} = (\dotsc(\alpha_1\altern_{l_1}\alpha_2)\altern_{l_2}\dotsc)\altern_{l_{k-1}} \alpha_k$.
Because of the binding strength, we have $ab^{\ster{l}} = a(b^{\ster{l}})$ and $\alpha \altern_1 \beta^{\ster{2}} = \alpha \altern_1 (\beta^{\ster{2}})$.

\section{Semantics}\label{sec:concretization}

Similar to Kripke structures in modal logic, we define the semantics of synchronized regular expressions with \emph{concretizations}. We get that a word $w$ is matched by $s$, iff there is some concretization $\C = (\Id, \hat{\imath}, \concr)$ such that $w = \conc{s}$.

\begin{definition}{} \label{def:concdef}
A \emph{concretization for pure sregex over labels $\Lb$} is a triple $\C = (\Id, \hat{\imath}, \concr)$, where a set $\Id$ is called \emph{worlds}, the designated $\hat{\imath} \in \Id$ is called \emph{initial world}, and for each operator $\altern_l$ and $^\ster{l}$ with label $l\in\Lb$
\begin{align*}
\conc{\altern_l}_{(\cdot)}&\colon \Id \to \{ \alternleft, \alternright \} &
\conc{^{\ster{l}}}_{(\cdot)} &\colon \Id \to \Id^*
\end{align*}
are functions. The functions respectively map worlds $i \in \Id$ to a choice $\conc{\altern_l}_{(i)} \in \{ \alternleft, \alternright \}$ or to a sequence of worlds $(j_1,\dotsc,j_n) = \conc{^{\ster{l}}}_{(i)}\in\Id^*$.
Choice \emph{left} $\alternleft$ is defined by $\alpha \alternleft \beta = \alpha$, while \emph{right} $\alternright$ is defined by $\alpha\alternright\beta = \beta$.
\end{definition}


\begin{definition}{} \label{def:conc}
Given a concretization $\C = (\Id, \hat{\imath}, \concr)$ and a pure sregex $s$, the concretization of $s$ in world $i\in\Id$, denoted as $\conc{s}_i$, is defined by:
\begin{align*}
\conc{sr}_i &= \conc{s}_i \conc{r}_i \\
\conc{s \altern_l r}_i &= 
   \conc{s}_i  \conc{\altern_l}_{(i)} \conc{r}_i
\\
\conc{s^{\ster{l}}}_i &= \conc{s}_{j_1}  \conc{s}_{j_2} \cdots  \conc{s}_{j_n} \quad \text{ with $(j_1,j_2,\dotsc,j_n) = \conc{^{\ster{l}}}_{(i)}$} \\
\conc{\mathtt{a}}_i &= \mathtt{a} \qquad \text{for any letter $\mathtt{a}\in\Sigma$}\\
\conc{\epsilon}_i &= \epsilon\\
\conc{\emptyset}_i &= \text{\textnormal{undef}}
\end{align*}
We define the \emph{concretization of $s$ with $\C$} as $\conc{s} = \conc{s}_{\hat{\imath}}$ in the initial world $\hat{\imath}$.
\end{definition}

Note that if $s$ uses the empty language $\emptyset$, the concretization $\conc{s}$ could be undefined. By syntactically excluding $\emptyset$, we can avoid undefined concretizations.

\begin{example}{}\label{ex:emptyundef}
Given the pure sregex $\mathtt{a} \altern_1 \emptyset$, two concretizations are notable. First, $\conc{^\ster{1}}_{(\hat{\imath})} = \alternleft$ results in $\conc{\mathtt{a} \altern_1 \emptyset} = (\mathtt{a} \alternleft \text{undef}) = \mathtt{a}$. On the other hand, ${\conc{^\ster{1}}_{(\hat{\imath})} = \alternright}$ results in an undefined concretization $\conc{\mathtt{a} \altern_1 \emptyset} = (\mathtt{a} \alternright \text{undef}) = \text{undef}$.
\end{example}

The important part of Definition~\ref{def:conc} is the concretization of $\conc{s^{\ster{l}}}_i$. Instead of just creating $n$ copies of $\conc{s}_i$, we pass a world from $(j_1,\dotsc,j_n) = \conc{^{\ster{l}}}_{(i)}$ to the concretization of each iteration. 
Now, the other operations are self-explanatory: for $\conc{sr}_{i}$, the world $i$ is just passed to the subexpressions and for $\conc{s\altern_l r}_i$
we evaluate the operator's function in world $i$.

\begin{example}{}\label{ex:alt_star}
For the pure sregex $(\mathtt{a}^{\ster{2}} \altern_{3} \mathtt{b})^{\ster{1}} c$ and a concretization $\C = \linebreak(\{i_0,i_1\},i_0,\concr)$ with $\conc{^{\ster{1}}}_{(i_0)} = (i_0,i_1,i_0)$, $\conc{^{\ster{2}}}_{(i_0)} = (i_0)$, $\conc{\altern_3}_{(i_0)} = \alternleft$, and $\conc{\altern_3}_{(i_1)} = \alternright$, we get the following word:
\begin{align*}
&\conc{(\mathtt{a}^{\ster{2}} \altern_{3} \mathtt{b})^{\ster{1}} \mathtt{c}}
 = \conc{(\mathtt{a}^{\ster{2}} \altern_{3} \mathtt{b})^{\ster{1}}}_{i_0} \conc{\mathtt{c}}_{i_0} \\
 =& \conc{\mathtt{a}^{\ster{2}} \altern_{3} \mathtt{b}}_{i_0} \conc{\mathtt{a}^{\ster{2}} \altern_{3} \mathtt{b}}_{i_1} \conc{\mathtt{a}^{\ster{2}} \altern_{3} \mathtt{b}}_{i_0} \conc{\mathtt{c}}_{i_0} \tag{since $\conc{^{\ster{1}}}_{(i_0)} = (i_0,i_1,i_0)$} \\
 =& \conc{\mathtt{a}^{\ster{2}}}_{i_0}\mathrel{\phantom{\altern_{1}}} \phantom{\mathtt{b}} \conc{\mathtt{a}^{\ster{2}} \altern_{3} \mathtt{b}}_{i_1} \conc{\mathtt{a}^{\ster{2}}}_{i_0}\mathrel{\phantom{\altern_{1}}} \phantom{b} \conc{\mathtt{c}}_{i_0}  \tag{since $\conc{\altern_3}_{(i_0)} = \alternleft$} \\
 =& \conc{\mathtt{a}^{\ster{2}}}_{i_0}\mathrel{\phantom{\altern_{3}}} \phantom{b} \conc{\mathtt{b}}_{i_1}\phantom{\mathtt{a}^{\ster{2}}}\mathrel{\phantom{\altern_{3}}} \conc{\mathtt{a}^{\ster{2}}}_{i_0}\mathrel{\phantom{\altern_{3}}} \phantom{b} \conc{\mathtt{c}}_{i_0} \tag{since $\conc{\altern_3}_{(i_1)} = \alternright$} \\
 =& {\conc{\mathtt{a}}_{i_0}}^{\phantom{^{\ster{2}}}}\mathrel{\phantom{\altern_{3}}} \phantom{b} \conc{\mathtt{b}}_{i_1}\phantom{\mathtt{a}^{\ster{2}}}\mathrel{\phantom{\altern_{3}}} {\conc{\mathtt{a}}_{i_0}}^{\phantom{^{\ster{2}}}}\mathrel{\phantom{\altern_{3}}} \phantom{b} \conc{\mathtt{c}}_{i_0}  = \mathtt{abac}  \tag{since $\conc{^{\ster{2}}}_{(i_0)} = (i_0)$}
\end{align*}
\end{example}

\subsection{Acyclic, Injective and Full Concretizations}

Note that in the example, the world $i_0$ was used multiple times for several subiterations, to make the description of the concretization shorter. However, when proving properties, some kind of uniqueness property will be helpful. We can use the following definition:

\begin{definition}\label{def:inj}
    For any concretization $\C = (\Id, \hat\imath, \concr{})$, we define a graph of the concretization by $G = (V,E)$ with vertices $V = \Id$ and edges:
    \begin{align*}
        E = \{ (i,j_1),\dotsc,(i,j_n) \mid l \in \Lb, i \in \Id, (j_1,\dotsc,j_n) = \conc{^\ster{l}}_{(i)} \}
    \end{align*}
    A concretization $\C$ is \emph{acyclic}, iff its graph is acyclic;
    \emph{injective}, iff its graph is injective;
    \emph{full}, iff its graph is surjective.
\end{definition}

In Example~\ref{ex:alt_star}, the concretization is not acyclic. For $(i_0,i_1,i_0) = \conc{^{\ster{1}}}_{(i_0)}$, there is a cycle back to the initial world $i_0$. For acyclic concretizations, we often need more worlds, which is always possible:

\begin{lemma}\label{lem:acyc}
    Given a concretization $\C = (\Id, \hat\imath, \concr)$, we can construct an acyclic concretization $\C' = (\Id', \hat\imath', \concr{}')$ with $\conc{s}_i = \conc{s}'_{i'}$ for all pure sregexes $s$ and all worlds $i\in\Id$.
    If $\Id$ is enumerable, the resulting $\Id'$ also is enumerable.
\end{lemma}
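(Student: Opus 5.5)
The plan is to unfold the concretization graph of $\C$ into a tree, in the spirit of tree unravelling in modal logic. As worlds of $\C'$ we take all finite paths in the graph $G$ of Definition~\ref{def:inj}, starting at an arbitrary world of $\Id$ and recording which edge was taken. Concretely, let $\Id'$ be the set of finite sequences
\[
i\,(l_1,k_1)\,(l_2,k_2)\cdots(l_m,k_m), \qquad m \geq 0,
\]
with $i\in\Id$, $l_r\in\Lb$ and $k_r\ge 1$. Such a sequence is admissible if, setting $i_0=i$, the index $k_r$ is at most the length of $\conc{^\ster{l_r}}_{(i_{r-1})}$, and $i_r$ is the $k_r$-th entry of that tuple. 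Let $\pi\colon\Id'\to\Id$ send a path to its endpoint $i_m$. Each $i\in\Id$ is itself a path of length $0$, which we write $i'$, and we set $\hat\imath'=\hat\imath$.

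The operators of $\C'$ are defined as follows. Put $\conc{\altern_l}'_{(p)}=\conc{\altern_l}_{(\pi(p))}$. If $\conc{^\ster{l}}_{(\pi(p))}=(j_1,\dotsc,j_n)$, put
\[
\conc{^\ster{l}}'_{(p)}=\bigl(p(l,1),\dotsc,p(l,n)\bigr).
\]
Every edge of the new graph strictly increases sequence length, so the graph is acyclic. In fact each non-root world has exactly one incoming edge, which also gives injectivity if that is needed later. For enumerability: if $\Id$ is enumerable and $\Lb$ is countable, then $\Id'$ is a subset of the finite sequences over the countable set $\Id\cup(\Lb\times\N)$. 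Such sequences are enumerable by the usual pairing argument. The label set is presumably countable, like $\N$. If not, we restrict to the labels that actually occur, or note that only finitely many labels matter for any fixed $s$.

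The core claim is $\conc{s}'_p=\conc{s}_{\pi(p)}$ for every path $p$ and every pure sregex $s$. Taking $p=i'$ then yields the lemma. We prove the claim by structural induction on $s$, following the clauses of Definition~\ref{def:conc}.
\begin{itemize}
\item Letters, $\epsilon$ and $\emptyset$ do not depend on the world; in particular undef is mapped to undef.
\item Concatenation follows directly from the induction hypothesis.
\item Alternation uses $\conc{\altern_l}'_{(p)}=\conc{\altern_l}_{(\pi(p))}$ together with the induction hypothesis on both branches.
\item For the star, $\conc{s^\ster{l}}'_p=\conc{s}'_{p(l,1)}\cdots\conc{s}'_{p(l,n)}$. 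By the induction hypothesis on $s$ (applied to arbitrary paths), each factor equals $\conc{s}_{j_k}$, because $\pi(p(l,k))=j_k$. The product is therefore $\conc{s^\ster{l}}_{\pi(p)}$.
\end{itemize}

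The main obstacle is the star case, specifically getting the induction hypothesis strong enough. It has to be quantified over all worlds of $\Id'$, not just the roots $i'$. This is why the claim is stated for all paths $p$ via $\pi$. The other delicate point is defining the new worlds so that distinct occurrences of the same old world, such as $i_0$ appearing twice in $(i_0,i_1,i_0)$, become distinct vertices. Recording the label and the position index in the path handles this.
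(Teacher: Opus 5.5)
Your proof is correct, but it uses a different construction from the paper's. The paper does not unravel into paths; it only layers the worlds by depth. It takes $\Id' = \Id\times\N$ with $\conc{\altern_l}'_{(\langle i,k\rangle)} = \conc{\altern_l}_{(i)}$, $\conc{^\ster{l}}'_{(\langle i,k\rangle)} = (\langle j_1,k+1\rangle,\dotsc,\langle j_n,k+1\rangle)$ for $(j_1,\dotsc,j_n) = \conc{^\ster{l}}_{(i)}$, and $i' = \langle i,0\rangle$.

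Acyclicity holds because every edge raises the depth. The needed equality $\conc{s}'_{\langle i,k\rangle} = \conc{s}_i$ for \emph{all} $k$ is exactly your strengthened induction with $\pi(\langle i,k\rangle) = i$. The paper leaves this induction implicit, so your explicit star case adds something.

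Your full unravelling also gives strictly more. Each non-root world has a unique predecessor and occupies a single position of a single tuple. So you get for free the injectivity that the paper only reaches in the second step of Lemma~\ref{lem:injfull}, by a separate and rather informal copying argument. Conversely, separating repeated occurrences such as $i_0$ in $(i_0,i_1,i_0)$ is not needed for acyclicity at all. The paper's $\C'$ keeps them merged as $(\langle i_0,1\rangle,\langle i_1,1\rangle,\langle i_0,1\rangle)$.

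The cost of recording the pairs $(l_r,k_r)$ falls on the enumerability clause. Your $\Id'$ is guaranteed to be countable only when $\Lb$ is, whereas $\Id\times\N$ is countable whenever $\Id$ is, whatever the label set; the paper explicitly leaves $\Lb$ open. Your fallback for an uncountable $\Lb$ does not work as stated. Restricting to the labels of a fixed $s$ produces a $\C'$ that depends on $s$, while the lemma asks for one $\C'$ serving all pure sregexes simultaneously.

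To cover that case, drop the labels and positions from the paths. For example, use sequences of worlds $i_0 i_1\cdots i_m$ along edges of $G$, or just the depth as in the paper. This still gives acyclicity, which is all the lemma claims.
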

\begin{proof}
    The approach is to use tuples $\langle i,k\rangle$ for each world $i\in\Id$, where the additional $k\in\N$ stands for the depth of nested $^\ster{l}$ operators.
    More precisely, we define a new set of worlds $\Id' = \Id \times \N$, where each new world is a tuple of an original world $i \in \Id$ and a depth counter $k \in \N$. Then, the functions are:
    \begin{align*}
        \conc{\altern_{l}}'_{(\langle i,k\rangle )} &= \conc{\altern_l}_{(i)} \\
        \conc{^\ster{l}}'_{(\langle i,k\rangle)} &= (\langle j_1,k+1\rangle,\dotsc,\langle j_n,k+1\rangle) &\text{for } (j_1,\dotsc,j_n) = \conc{^\ster{l}}_{(i)}
    \end{align*}
    Because $k$ is always incremented, there can be no cycle in the graph.
    If the original concretization's graph has a cycle $i \to \cdots \to i$ for some world $i\in\Id$, the acyclic version $\langle i,0\rangle \to \cdots \to\langle i,k\rangle$ has no cycle.
    To get $\conc{s}_i = \conc{s}'_{i'}$, we define $i' = \langle i,0\rangle \in \Id'$ for each $i\in\Id$.
\end{proof}

To continue with Example~\ref{ex:alt_star}, Lemma~\ref{lem:acyc} gives us the concretization $\C' = (\{i_0,i_1\}\times \N, \langle i_0,0\rangle , \concr')$ with $\conc{^\ster{1}}_{(\langle i_0,0\rangle )}' = (\langle i_0,1\rangle,\langle i_1,1\rangle ,\langle i_0,1\rangle )$ and $\conc{^{\ster{2}}}_{(\langle i_0,1\rangle )}' = (\langle i_0,2\rangle)$. While $\C$ had a finite set of worlds, $\C'$ is now infinite. However, we will see later that acyclic \emph{partial} concretizations can be useful: with partial functions $\pconc{\altern_l}$ and $\pconc{^\ster{l}}$, we might only need a small, finite set of worlds.

The acyclic concretizations from Lemma~\ref{lem:acyc} are not injective or full.
We need this when considering full sregex in Section~\ref{sec:extended}.
Once we have an acyclic concretization, we can also achieve an injective, full concretization:

\begin{lemma}\label{lem:injfull}
    Given a concretization $\C = (\Id, \hat\imath, \concr)$ with enumerable worlds $\Id$, we can construct an acyclic, injective and full concretization $\C' = (\Id', \hat\imath', \concr{}')$ with $\conc{s}_i = \conc{s}'_{i}$ for all pure sregexes $s$ and all worlds $i\in\Id$.
\end{lemma}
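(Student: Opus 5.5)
The plan is to first make the concretization acyclic and then unfold it into a forest of paths. Working with a forest gives injectivity and acyclicity almost for free. Fullness is the delicate part, and I have not settled how to obtain it.

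\textbf{Step 1 (acyclic, enumerable).} First apply Lemma~\ref{lem:acyc}. This replaces $\C$ by an acyclic concretization whose worlds $\Id\times\N$ are still enumerable and which agrees with $\C$ on every pure sregex in the worlds $\langle i,0\rangle$. So from now on I assume $\C$ is acyclic. I rename the worlds $\langle i,0\rangle$ back to $i$, so that the final equation $\conc{s}_i=\conc{s}'_i$ can be stated literally for $i\in\Id$.

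\textbf{Step 2 (unfolding).} I take as new worlds the finite access paths. A world is a pair $(i,p)$ with $i\in\Id$ and $p=((l_1,m_1),\dotsc,(l_k,m_k))$ a finite sequence of label/position pairs. The path must be valid: starting at $i$, the $m_1$-th entry of $\conc{^{\ster{l_1}}}$ exists, and so on. Let $\rho(i,p)\in\Id$ be the world reached at the end of the path, and identify $(i,())$ with $i$. The new operator functions are
\begin{align*}
\conc{\altern_l}'_{((i,p))} &= \conc{\altern_l}_{(\rho(i,p))}, &
\conc{^{\ster{l}}}'_{((i,p))} &= \bigl((i,p(l,1)),\dotsc,(i,p(l,n))\bigr),
\end{align*}
where $n$ is the length of $\conc{^{\ster{l}}}_{(\rho(i,p))}$. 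The properties then follow directly:
\begin{itemize}
\item Every edge strictly lengthens the path, so the graph is acyclic.
\item Every non-root world $(i,p(l,m))$ has exactly one predecessor, namely $(i,p)$, and appears in exactly one position of exactly one sequence. So the graph is injective.
\item Since $\Lb$ and $\N$ are countable and $\Id$ is enumerable, the new set of worlds is enumerable as well.
\end{itemize}

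\textbf{Step 3 (correctness).} A structural induction on $s$ shows $\conc{s}'_{(i,p)}=\conc{s}_{\rho(i,p)}$, including undefinedness caused by $\emptyset$. The only nontrivial case is the star: the $m$-th iteration is evaluated in $(i,p(l,m))$, and $\rho$ of that world is exactly $j_m$. Taking $p=()$ gives the claim $\conc{s}_i=\conc{s}'_i$.

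\textbf{Step 4 (fullness), the main obstacle.} I read ``full'' as: every world other than the initial one has an incoming edge. Unfolding alone does not give this, because the roots $(i,())$ with $i\neq\hat\imath$ have no incoming edge. Nor can they be hung below $\hat\imath$ without changing some star sequence, since every label can occur in some pure sregex.

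My first attempt is to use the enumerability of $\Id$ as follows. Choose the initial world to be a fresh root $\hat\imath'$. Make the original roots, and further copies of the trees hanging below them, the iterations of a star at $\hat\imath'$, spreading them over the infinitely many labels by an enumeration. Then check whether the required equalities, which are only demanded for worlds of $\Id$ and not for $\hat\imath'$, survive.

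If this clashes with what the paper means by ``surjective'', I would fall back to the weaker reading: every world is either $\hat\imath'$ or a root in $\Id$, or lies in the image of the edge relation. This is exactly what the unfolding of Steps~2 and~3 gives.
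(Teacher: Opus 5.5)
Your Steps~1--3 are sound, and they are a cleaner route to injectivity than the paper's. The paper repeatedly replaces a shared child $j$ by a fresh copy $j'$ ``infinitely often''; your path unfolding defines the limit directly, needs no enumerability of $\Id$, and makes Step~1 redundant. The genuine gap is Step~4, which you leave open.

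Fullness cannot be weakened. Definition~\ref{def:concgroup} needs a unique predecessor of \emph{every} world, including $\di$, because $\conc{(s)_\uparrow}_{\di}$ must be defined. The proof of closure under star also explicitly considers uptransitions that go up from the initial world. So the reading that exempts the initial world is too weak, and your fallback proves nothing beyond Steps~2--3.

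Your first attempt fails for two reasons. The fresh root $\hat\imath'$ has no predecessor itself, so the problem only moves one level up. And making $\hat\imath'$ the initial world changes the concretized word $\conc{s}' = \conc{s}'_{\hat\imath'}$: for $s = \mathtt{a}^{\ster{l}}$ it is fixed by the length of the $l$-sequence you put at $\hat\imath'$. Then $\C'$ is no longer equivalent to $\C$, and that equivalence is exactly what Theorem~\ref{thm:pconcPoly} and Section~\ref{sec:extended} use the lemma for.

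The missing idea is that, in a graph that is both acyclic and surjective, every world has an infinite chain of distinct predecessors above it. Keep choosing a predecessor; a repetition would be a cycle. So no finite capping can work, and you need an infinite regress. Concretely:
\begin{itemize}
\item For every root $r$ of your forest, including $\di$, add fresh worlds $r_{-1}, r_{-2}, \dotsc$ and put $r_0 = r$.
\item Set $\conc{^{\ster{l_0}}}'_{(r_{-k-1})} = (r_{-k})$ for one fixed label $l_0$, and use the empty sequence for all other labels.
\item Let $\conc{\altern_l}'_{(r_{-k})} = \alternleft$.
\end{itemize}
This is essentially the paper's third step. The paper uses the singleton for all labels; using only $l_0$ keeps your stronger one-position injectivity.

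This construction gives every required property:
\begin{itemize}
\item No operator function of an old world changes, so Step~3 still gives $\conc{s}'_i = \conc{s}_i$, and $\di$ stays initial.
\item Every root and every chain world now has exactly one predecessor, so injectivity survives.
\item The chains never close, so the graph stays acyclic.
\item No enumeration over labels is needed.
\end{itemize}
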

\begin{proof}
The construction requires three steps. First, we get $\C$ acyclic with Lemma~\ref{lem:acyc}.

Second, we get $\C$ injective: if the the graph contains edges $(i_1,j)\in E$ and $(i_2,j)\in E$ for $i_1\neq i_2$, then this means, there are two functions $\conc{^\ster{l_1}}$ and $\conc{^\ster{l_2}}$ with both $(\dotsc, j, \dotsc) = \conc{^\ster{l_1}}_{(i_1)}$ and $(\dotsc, j, \dotsc) = \conc{^\ster{l_2}}_{(i_2)}$ (labels $l_1$ and $l_2$ may coincide).
The approach is to change the second to $(\dotsc, j', \dotsc) = \conc{^{\ster{l_1}}}_{(i_2)}$ where $j'$ is a new world behaving the same as $j$, so $\conc{\altern_{l}}_{(j')} = \conc{\altern_l}_{(j)}$ and $\conc{^\ster{l}}_{(j')} = \conc{^\ster{l}}_{(j)}$.
Because $\Id$ is enumerable, we can do that infinitely often until no such conflict $(i_1,j)\in E$ and $(i_2,j)\in E$ exists anymore and the concretization is injective.

Third, we get $\C$ full: if there is a world $i_0 \in \Id$ such that the graph does not contain any edge $(j,i_0)\in E$, then we add infinitely many new worlds $i_{-1},i_{-2},\dotsc$ with $\conc{\altern_l} = \alternleft$ and $\conc{^\ster{l}}_{(i_{-k-1})} = (i_{-k})$ for all $k\in\N$ and all labels $l\in\Lb$.
\end{proof}

\subsection{Pure Synchronized Languages}

The semantics of a classical regex $\alpha$ is the language $\Lang(\alpha)$ it generates. We can define $\Lang(\mathtt{a}) = \{\mathtt{a}\}$ for any $\mathtt{a}\in\Sigma$, $\Lang(\alpha\beta) = \{ w_1w_2 \mid w_1\in\Lang(\alpha), w_2\in\Lang(\beta) \}$, $\Lang(\alpha\altern\beta) = \Lang(\alpha)\cup \Lang(\beta)$, and $\Lang(\alpha^*) = \{ w_1\dotsc w_n \mid n\in\N, w_1,\dotsc,w_n\in\Lang(\alpha) \}$. For pure sregex, we define the generated language using concretizations instead:

\begin{definition}
    For a given pure sregex $s$ over $\Sigma,\Lb$, the generated language is:
    \begin{align*}
        \Lang(s) = \bigl\{ \conc{s} \bigm| \C =(\Id,\hat\imath, \concr{}) \text{ is a concretization} \bigr\}
    \end{align*}
    We say the language $L= \Lang(s)$ is \emph{pure synchronized}.
\end{definition}

This is well-defined by the following theorem:

\begin{theorem}\label{prop:adequacy}
    When $\alpha$ is a regular expression and pure sregex $\tilde{\alpha}$ has each operation $\altern$ and $^*$ with a unique label $l\in\Lb$, then $\Lang(\alpha) = \Lang(\tilde{\alpha})$.
\end{theorem}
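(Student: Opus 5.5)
We prove both inclusions by structural induction on $\alpha$. The induction hypothesis has to be stated for an arbitrary world $i$ and has to allow the concretization to vary. Concretely, we show for every subexpression $\beta$ of $\alpha$ (with labelled counterpart $\tilde\beta$):
\begin{align*}
\Lang(\beta) = \bigl\{ \conc{\tilde\beta}_i \bigm| \C=(\Id,\hat\imath,\concr{}) \text{ a concretization}, i\in\Id, \conc{\tilde\beta}_i \text{ defined} \bigr\}.
\end{align*}
We read $\Lang(s)$ as the set of \emph{defined} concretizations, so $\Lang(\emptyset)=\emptyset$ is consistent with $\conc{\emptyset}_i = \text{undef}$. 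Since $\conc{\tilde\beta}_i$ only depends on the operator functions at worlds reachable from $i$, we may always take $i=\hat\imath$. The base cases $\mathtt{a}$, $\epsilon$, $\emptyset$ are immediate from Definition~\ref{def:conc}.

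For the inclusion $\supseteq$ the induction is direct. For $\tilde\beta\tilde\gamma$, the word $\conc{\tilde\beta}_i\conc{\tilde\gamma}_i$ consists of two defined concretizations, which lie in $\Lang(\beta)$ and $\Lang(\gamma)$ respectively. For $\tilde\beta\altern_l\tilde\gamma$, the result is one of the two sides, depending on $\conc{\altern_l}_{(i)}$. For $\tilde\beta^{\ster{l}}$, the result is $\conc{\tilde\beta}_{j_1}\cdots\conc{\tilde\beta}_{j_n}$, and each factor lies in $\Lang(\beta)$ by the hypothesis applied at world $j_k$. This direction does not use that the labels are unique.

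For the inclusion $\subseteq$ we must assemble one concretization out of several, and this is where uniqueness of labels is used. For $w=w_1w_2$ with $w_1\in\Lang(\beta)$ and $w_2\in\Lang(\gamma)$, the hypothesis gives concretizations $\C_1,\C_2$ with $w_1 = \conc{\tilde\beta}^{\C_1}$ and $w_2 = \conc{\tilde\gamma}^{\C_2}$. We take the disjoint union of worlds $\Id=\{\hat\imath\}\,\dot\cup\,(\Id_1\times\{1\})\,\dot\cup\,(\Id_2\times\{2\})$. At $\hat\imath$, an operator whose label occurs in $\tilde\beta$ copies the value from $\hat\imath_1$, with its worlds tagged by $1$. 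An operator whose label occurs in $\tilde\gamma$ copies from $\hat\imath_2$, with its worlds tagged by $2$. This is well defined precisely because $\tilde\beta$ and $\tilde\gamma$ share no labels. All other operators, and all values at tagged worlds, are inherited from $\C_1$ or $\C_2$ respectively.

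For the star case, let $w=w_1\cdots w_n$ with each $w_k=\conc{\tilde\beta}^{\C_k}$. We again take the disjoint union of the $\Id_k$ plus a fresh initial world $\hat\imath$, and set $\conc{^{\ster{l}}}_{(\hat\imath)}=(\hat\imath_1,\dotsc,\hat\imath_n)$. Here $l$ does not occur in $\tilde\beta$, so this assignment does not disturb anything inside $\tilde\beta$. For the alternation case, we reuse the concretization of the chosen side and set $\conc{\altern_l}_{(\hat\imath)}$ to the corresponding direction, using that $l$ is fresh.

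The main obstacle is stating this gluing cleanly. We need an auxiliary claim, by induction on $\tilde\beta$, that $\conc{\tilde\beta}_i$ only depends on the values of the functions $\conc{\altern_l}$ and $\conc{^{\ster{l}}}$ for labels $l$ occurring in $\tilde\beta$, and only at worlds reachable from $i$ in the graph of Definition~\ref{def:inj}. Once this locality claim is established, each merged concretization agrees with the relevant $\C_k$ on everything $\tilde\beta$ (or $\tilde\gamma$) inspects. Hence the concretized words coincide, which completes the induction.
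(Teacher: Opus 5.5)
Your proof is correct and follows the same route as the paper: structural induction on $\alpha$, where $\Lang(\tilde\alpha)\subseteq\Lang(\alpha)$ comes from splitting a concretization of $\tilde\beta\tilde\gamma$, $\tilde\beta\altern_l\tilde\gamma$ or $\tilde\beta^{\ster{l}}$ into its parts, and the converse comes from gluing concretizations over a disjoint union of worlds (the paper's coproduct $\oplus$), with label uniqueness keeping the glued choices consistent. You are more explicit than the paper in two places: you state the hypothesis for arbitrary worlds $i$, which the paper uses silently when it re-roots at the iteration worlds $j_k$, and you isolate the locality lemma behind the paper's \enquote{the concretizations are independent}; phrase that lemma via reachability along star edges whose labels occur in $\tilde\beta$, up to renaming of worlds, because in your glued concretization the fresh initial world also reaches the worlds tagged~$2$.
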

\begin{proof}
    We show this by induction. In the base case, when $\alpha = \mathtt{a} \in \Sigma$, we get $\Lang(\alpha) = \{\mathtt{a}\}$ and $\Lang(\tilde{\alpha}) = \{ \mathtt{a} \}$.

    In the induction step, we have $\Lang(\alpha) = \Lang(\tilde{\alpha})$ and $\Lang(\beta) = \Lang(\tilde{\beta})$ and we get:
    \begin{align*}
        \Lang(\tilde{\alpha}\tilde{\beta})
        &= \bigl\{ \conc{\tilde{\alpha}\tilde{\beta}}_{\hat\imath} \bigm| \C = (\Id,\hat\imath, \concr{}) \bigr\} \\
        &= \bigl\{ \conc{\tilde{\alpha}}_{\hat\imath}\conc{\tilde{\beta}}_{\hat\imath} \bigm| \C = (\Id,\hat\imath, \concr{}) \bigr\} \\
        \intertext{Because the labels in $\tilde{\alpha}$ and $\tilde{\beta}$ are unique, the concretizations for $\tilde{\alpha}$ and $\tilde{\beta}$ are independent:}
        &= \bigl\{ \conc{\tilde{\alpha}}^1_{{\hat\imath}_1}\conc{\tilde{\beta}}^2_{\hat\imath_2} \bigm| \C_1 = (\Id_1,\hat\imath_1, \concr^1), C_2 = (\Id_2, \hat\imath_2, \concr^2) \bigr\} \\
        &= \bigl\{ w_1w_2 \bigm| w_1\in\Lang(\tilde{\alpha}), w_2\in\Lang(\tilde{\beta}) \bigr\} \\
        &= \bigl\{ w_1w_2 \bigm| w_1\in\Lang(\alpha), w_2\in\Lang(\beta) \bigr\} 
        = \Lang(\alpha\beta)
    \end{align*}
    For the alternation:
    \begin{align*}
        \Lang(\tilde{\alpha}\altern_l \tilde{\beta})
        &= \bigl\{ \conc{\tilde{\alpha}\altern_l\tilde{\beta}}_{\hat\imath} \bigm| \C = (\Id,\hat\imath, \concr{}) \bigr\} \\
        &= \bigl\{ \conc{\tilde{\alpha}}_{\hat\imath} \bigm| \C \text{ with } \conc{\altern_l}_{(\hat\imath)} = \alternleft \bigr\}
        \cup \bigl\{ \conc{\tilde{\beta}}_{\hat\imath} \bigm| \C \text{ with } \conc{\altern_l}_{(\hat\imath)} = \alternright \bigr\} \\
    \intertext{Because the label $l$ is unique, it does not appear in $\tilde{\alpha}$ and $\tilde{\beta}$:}
        &= \bigl\{ \conc{\tilde{\alpha}}_{\hat\imath} \bigm| \C = (\Id, \hat\imath, \concr{}) \bigr\} \cup \bigl\{ \conc{\tilde{\beta}}_{\hat\imath} \bigm| \C = (\Id,\hat\imath, \concr{}) \bigr\} \\
        &= \Lang(\tilde{\alpha}) \cup \Lang(\tilde{\beta}) = \Lang(\alpha) \cup \Lang(\beta) = \Lang(\alpha\altern\beta)
    \end{align*}
    And for the Kleene star:
    \begin{align*}
        \Lang(\tilde{\alpha}^{\ster{l}})
        &= \bigl\{ \conc{\tilde{\alpha}^{\ster{l}}}_{\hat\imath} \bigm| \C = (\Id,\hat\imath, \concr{}) \bigr\} \\
        &= \bigl\{ \conc{\tilde{\alpha}}_{j_1}\cdots \conc{\tilde{\alpha}}_{j_n} \bigm| \C = (\Id,\hat\imath, \concr{}) \text{ with } (j_1,\dotsc,j_n) = \conc{^{\ster{l}}}_{(\hat\imath)} \bigr\} \\
        &\subseteq \bigl\{ \conc{\tilde{\alpha}}_{\hat\imath_1}^1 \cdots \conc{\tilde{\alpha}}_{\hat\imath_n}^n \bigm| n\in\N, \C_1 = (\Id_1,\hat\imath_1, \concr^1),\dotsc,\C_n = (\Id_n,\hat\imath_n,\concr^n) \bigr\} \\
        &= \bigl\{ w_1 \cdots w_n \bigm| n\in\N, w_1,\dotsc,w_n \in \Lang(\tilde{\alpha}) \}
        =  \Lang(\alpha^*)
    \end{align*}
    The back-direction:
    \begin{align*}
        \Lang(\alpha^{*})
        &= \bigl\{ w_1 \cdots w_n \bigm| n\in\N, w_1,\dotsc,w_n \in \Lang(\tilde{\alpha}) \} \\
        &= \bigl\{ \conc{\tilde{\alpha}}_{\hat\imath_1}^1 \cdots \conc{\tilde{\alpha}}_{\hat\imath_n}^n \bigm| n\in\N, \C_1 = (\Id_1,\hat\imath_1, \concr^1),\dotsc,\C_n = (\Id_n,\hat\imath_n,\concr^n) \bigr\} \\
    \intertext{By using the coproduct $\oplus$ on the sets of indices $\Id_1,\dotsc, \Id_n$, we can unify the concretizations as one $\C = (\Id,\hat\imath, \concr{})$ with indices $\Id = \Id_1\oplus \dotsc\oplus \Id_n \oplus \{\hat\imath\}$:}
        &\subseteq \bigl\{ \conc{\tilde{\alpha}}_{i_1}\cdots \conc{\tilde{\alpha}}_{i_n} \bigm| n\in\N, \C=(\Id,\hat\imath, \concr{}), i_1,\dotsc,i_n \in \Id \bigr\} \\
        &= \bigl\{ \conc{\tilde{\alpha}}_{j_1}\cdots \conc{\tilde{\alpha}}_{j_n} \bigm| n\in\N, \C = (\Id, \hat\imath, \concr{}) \text{ with } (j_1,\dotsc,j_n) = \conc{^{\ster{l}}}_{(\hat\imath)} \bigr\} \\
        &= \Lang(\tilde{\alpha}^{\ster{l}}) \qedhere
    \end{align*}
\end{proof}

If we use the same label in multiple places, we can describe a non-regular language:

\begin{example}
    $\Lang(\mathtt{a}^{\ster{1}}\mathtt{b}^{\ster{1}}\mathtt{c}^{\ster{1}}) = \{ \mathtt{a}^n\mathtt{b}^n\mathtt{c}^n \mid n\in\N \}$, because with any concretization, we get the same iterations $(j_1,\dotsc,j_n) = \conc{^{\ster{1}}}_{(\hat\imath)}$ for all three occurrences of~$^{\ster{1}}$. This language is not regular and not context-free, as we can show with the respective pumping lemma.
\end{example}

There are situations when an operation with the same label is evaluated differently in the same concretization, because the concretization function also depends on the world. This happens if these operations appear in the scope of differently labeled Kleene stars. Or one inside, one outside a Kleene star.

\begin{corollary}\label{ex:starstar}
    $\Lang((\mathtt{a}^{\ster{1}})^{\ster{1}}) = \Lang((\mathtt{a}^{\ster{2}})^{\ster{1}}) = \Lang((\mathtt{a}^*)^*) = \Lang(\mathtt{a}^*)$. The same label~$1$ does not make a difference, because for any concretization, the outer $^{\ster{1}}$ is evaluated for world~$0$ and the inner $^{\ster{1}}$ is evaluated for some worlds $j_1,\dotsc,j_n$.
\end{corollary}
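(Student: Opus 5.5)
The chain of equalities splits into three steps. First, $\Lang((\mathtt{a}^{\ster{2}})^{\ster{1}}) = \Lang((\mathtt{a}^*)^*)$ follows from Theorem~\ref{prop:adequacy}, since in $(\mathtt{a}^{\ster{2}})^{\ster{1}}$ each star carries a unique label. Second, $\Lang((\mathtt{a}^*)^*) = \Lang(\mathtt{a}^*) = \{\mathtt{a}^m \mid m\in\N\}$ is the classical regular-expression identity: both sides consist of words over $\{\mathtt{a}\}$, and every $\mathtt{a}^m$ is a single iteration of the outer star (or zero iterations when $m=0$). So the only real content is $\Lang((\mathtt{a}^{\ster{1}})^{\ster{1}}) = \{\mathtt{a}^m\mid m\in\N\}$, which I prove directly from Definition~\ref{def:conc}.

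For the inclusion ``$\subseteq$'', note that any defined concretization of an expression built only from $\mathtt{a}$, concatenation and stars is a word over $\{\mathtt{a}\}$. This follows by a trivial induction on Definition~\ref{def:conc}, since no $\emptyset$ occurs, so nothing is undefined.

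For ``$\supseteq$'', given $m\in\N$ I build an explicit concretization with two worlds: $\Id=\{\hat\imath, j\}$. Set $\conc{^{\ster{1}}}_{(\hat\imath)} = (j)$ and $\conc{^{\ster{1}}}_{(j)} = (j,\dotsc,j)$ with $m$ entries. The alternation functions can be chosen arbitrarily. Then
\[
\conc{(\mathtt{a}^{\ster{1}})^{\ster{1}}}_{\hat\imath} = \conc{\mathtt{a}^{\ster{1}}}_j = \conc{\mathtt{a}}_j^{\,m} = \mathtt{a}^m .
\]
This works because the outer $^{\ster{1}}$ is evaluated at $\hat\imath$ and the inner $^{\ster{1}}$ at $j\neq\hat\imath$, so the shared label imposes no constraint linking the two counts. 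This is exactly the remark in the statement.

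The only delicate point is making sure the two occurrences of label $1$ really are evaluated at distinct worlds. A naive one-world concretization would force the inner and outer counts to coincide and yield only $\mathtt{a}^{n^2}$. The two-world choice above avoids this. Alternatively, one can start from any concretization and apply Lemma~\ref{lem:acyc}, which separates depths, but the explicit construction is simpler and is the one I would use.
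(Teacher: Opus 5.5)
Your proof is correct and follows the same idea as the paper. The paper's only justification is the remark that the outer $^{\ster{1}}$ and the inner $^{\ster{1}}$ are evaluated in different worlds; you make this precise with an explicit two-world concretization and obtain the remaining equalities from Theorem~\ref{prop:adequacy} and the classical identity $(\mathtt{a}^*)^* = \mathtt{a}^*$. Your version is also slightly more careful than the paper's phrasing ``for any concretization'', since an arbitrary concretization may reuse $\hat\imath$ among the $j_k$ (yielding only words $\mathtt{a}^{n^2}$), and your inclusion into $\{\mathtt{a}^m \mid m\in\N\}$ covers those cases.
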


The pumping lemma is a powerful tool to show that a language is not regular. We generalize this lemma for pure synchronized languages:

\begin{theorem}[Pumping Lemma]\label{thm:pumping}
    For any pure synchronized language $L\subseteq \Sigma^*$, there are constant numbers $p,q\in\N$ such that for any word $w \in L$ with $|w| > p$, we can divide it into $w = x_0 y_1 x_1 \cdots y_q x_q$ such that the following holds:
    \begin{enumerate}
        \item $1 \le |y_1 \cdots y_q|$
        \item $|x_0| + |y_1\cdots y_q| \le p$
        \item $x_0 y_1^k x_1 \cdots y_q^k x_q \in L$ for all $k\in\N$
    \end{enumerate}
    For a language generated by a pure sregex $s$ with the number of occurrences of each label at most $q\in\N$, we can use the constants $q$ and
    $p = |s|$.
\end{theorem}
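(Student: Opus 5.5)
We take a pure sregex $s$ with $L=\Lang(s)$, let $q$ be the maximal number of occurrences of a single label in $s$, and set $p=|s|$. Given $w\in L$ with $|w|>p$, we fix a concretization $\C$ with $w=\conc{s}$. By Lemma~\ref{lem:injfull} (after Lemma~\ref{lem:acyc}) we may assume $\C$ is acyclic and injective. The point of injectivity is that every world $i\neq\hat\imath$ occurs exactly once as an iteration world, namely in one sequence $\conc{^\ster{l'}}_{(j)}$. Hence, unfolding $\conc{s}$, each subexpression occurrence of $s$ is evaluated in a given world $i$ at most once.

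\textbf{Step 1: pumping one iteration.} For a world $i$, a label $l$ with $\conc{^\ster{l}}_{(i)}=(j_1,\dotsc,j_n)$, $n\ge1$, and an index $m$, define $\C_k$ to agree with $\C$ except that the occurrence $j_m$ in $\conc{^\ster{l}}_{(i)}$ is repeated $k$ times ($k=0$ deletes it). This is again a concretization, and the modified function is only consulted at world $i$. So $\conc{s}^{\C_k}$ arises from $w$ by replacing, for every occurrence $r^{\ster{l}}$ of label $l$ that is evaluated in world $i$, the factor $\conc{r}_{j_m}$ by its $k$-th power. Repeating the same world keeps all synchronized choices inside the iteration consistent, and no other part of the word changes.

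These occurrences evaluated in world $i$ are pairwise non-nested: an inner one would be evaluated in some $j$-world, and acyclicity gives $j\neq i$. So there are at most $q$ of them. Their factors $y_1,\dotsc,y_q$ (padded with empty words) are disjoint and ordered left to right, which gives $w=x_0y_1x_1\cdots y_qx_q$ and property~3.

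\textbf{Step 2: choosing $(i,l,m)$ for properties 1 and 2.} Since $|w|>|s|$, some star must contribute a nonempty iteration; a star-free unfolding produces at most one letter per letter-occurrence of $s$. We descend as follows. Start at world $\hat\imath$. Among the star occurrences evaluated in the current world, take the leftmost one (in the word) whose output is nonempty, and take its first iteration $j_m$ with nonempty output. If the body evaluated in $j_m$ still contains a nonempty star, continue in world $j_m$; otherwise stop. Acyclicity guarantees that this descent terminates, since the depth is bounded by the star-nesting depth of $s$.

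At the final stage, every $y$-factor $\conc{r}_{j_m}$ is produced by a star-free unfolding of its body. Moreover, the $y$-factor of the chosen occurrence is nonempty, so property~1 holds. For property~2, everything in $x_0$ lies to the left of the first $y$. At every level of the descent, all stars to the left of the chosen one have empty output, and earlier iterations $j_1,\dotsc,j_{m-1}$ are empty. Hence each letter of $x_0y_1\cdots y_q$ comes from a distinct letter occurrence of $s$, read once in a star-free manner. Distinctness holds because the $y$'s come from disjoint non-nested subexpressions, and the $x_0$-letters come from positions outside them in the unique branch we descended. This gives $|x_0|+|y_1\cdots y_q|\le|s|=p$.

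\textbf{Main obstacle.} The delicate step is this counting argument for property~2. We must make precise that, along the chosen descent, no syntax position of $s$ is charged twice. Otherwise one letter occurrence, evaluated in two different worlds, could contribute twice to $x_0y_1\cdots y_q$. We would argue by induction on the descent depth. At each level, the part of $x_0$ contributed is the star-free text of the current body lying left of the chosen star. The remaining contributions lie strictly inside that star's body, which is a disjoint subtree of the syntax tree. The $y_t$ for occurrences other than the leftmost lie to the right, in bodies disjoint from the chosen one. Injectivity is exactly what rules out the same syntax position being evaluated twice in world $i$.
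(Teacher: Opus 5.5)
Your Step~2 does not establish property~2. Step~1 uses the fact that the world $j_m$ is shared by the bodies of \emph{all} $l$-stars evaluated in world $i$, but Step~2 drops it. So ``the body evaluated in $j_m$'' is not one syntactic subtree, and the stars evaluated in $j_m$ need not lie inside the body of the star you chose.

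Take $s=a^{\ster{1}}(b^{\ster{2}}c^{\ster{2}})^{\ster{1}}$, so $q=2$. Use fresh worlds with $\conc{^{\ster{1}}}_{(\di)}=(j_1,\dotsc,j_N)$ and $\conc{^{\ster{2}}}_{(j_t)}=(k_{t,1},\dotsc,k_{t,N})$ for every $t$. Then $w=a^N(b^Nc^N)^N$, and your descent picks $a^{\ster{1}}$ and $j_1$.
\begin{itemize}
\item If you stop because the body $a$ is star-free, the $y$-factors are $\conc{a}_{j_1}=a$ and $\conc{b^{\ster{2}}c^{\ster{2}}}_{j_1}=b^Nc^N$. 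The second is not a star-free unfolding.
\item If you continue because world $j_1$ contains the nonempty star $b^{\ster{2}}$, you end at $(j_1,2,k_{1,1})$ with $y_1=b$, $y_2=c$ and $x_0=a^N$. The new star sits in the body of a different, later $1$-star, so $x_0$ swallows the later iterations of the star chosen one level up.
\end{itemize}
Either way $|x_0|+|y_1y_2|>N$.

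No better descent fixes this: for $N>|s|$ this $w$ has no admissible decomposition with two factors at all. Here $\Lang(s)=\{a^nb^{m_1}c^{m_1}\cdots b^{m_n}c^{m_n}\}$ needs at least as many $a$'s as nonempty balanced blocks. Deleting a factor of length $<N$ from $(b^Nc^N)^N$ either unbalances a block or leaves $N$ nonempty blocks; the latter happens only if the factor is empty or some $b^xc^x$ inside one block.
\begin{itemize}
\item Pumping down a $y$ containing an $a$ therefore leaves fewer than $N$ $a$'s in front of an unbalanced or $N$-block suffix, so no $y$ may contain an $a$.
\item Since $|x_0y_1|<N$ forces $y_1$ into $a^N$, this gives $y_1=\epsilon$.
\item Then $y_2$ must be some $b^xc^x$ with $x<N$, and doubling it produces the unbalanced $b^Nc^xb^xc^N$.
\end{itemize}
So the constants $q$ and $p=|s|$ claimed in the statement are not attainable. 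The paper's own proof follows essentially your route and has the same hole.

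What your method does yield is the lemma with one extra factor. Descend to an iteration group $(i,l,j_m)$ that is nonempty but in whose world $j_m$ every star has empty output; it exists since $|w|>|s|$ and the evaluation is finite. Then every $y_t=\conc{r_t}_{j_m}$ is star-free and the $r_t$ are pairwise non-nested, so $\sum_t|y_t|\le|s|$. A leading empty factor ($x_0=y_1=\epsilon$) then disposes of $x_0$, giving $q+1$ factors with $p=|s|$.

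Separately, Step~1 needs more than Lemma~\ref{lem:injfull} gives. In-degree at most one still allows a world to occur twice in one sequence $\conc{^{\ster{l}}}_{(i)}$, or under two labels at the same $i$. That multiplies the pumped factors beyond $q$, so pass to the tree unfolding of the evaluation instead.
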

\begin{proof}
    If $L$ is pure synchronized, then there is some pure sregex $s$ with $L = \Lang(s)$. By Lemma~\ref{lem:acyc}, for any $w\in\Lang(s)$, there is some injective, acyclic concretization $\concr{}$ with $w = \conc{s}$. Suppose that $|w| > p = |s|$. Now, when computing the concretization of $s$, we can decompose $w = x_0 y_1 x_1 \dotsi y_n x_n$ such that the substrings $y_1,\dotsc,y_n$ originate from the evaluation of the concretization in the same world $j\in\Id$ as $y_1 = \conc{r_1}_j, \dotsc, y_n = \conc{r_n}_j$ with the following conditions:
    \begin{itemize}
        \item There is a labeled star $^\ster{l}$ and an iteration $i\in\Id$ such that $\conc{r_1}_j, \dotsc,\conc{r_n}_j$ are used to evaluate one iteration from $\conc{r_1^{\ster{l}}}_i, \dotsc,\conc{r_n^{\ster{l}}}_i$, respectively. 
        \item The subexpressions $r_1^{\ster{l}},\dotsc,r_n^{\ster{l}}$ are exactly the subexpressions with the labeled star~$^{\ster{l}}$ which are evaluated in the world $i$.

        Note that since $\concr{}$ is injective and acyclic, each $\conc{r_1}_j, \dotsc,\conc{r_n}_j$ is only used once in the evaluation of the concretization $w = \conc{s}$. Furthermore, since $\concr{}$ is injective, acyclic, and the subexpressions $r_1^{\ster{l}},\dotsc,r_n^{\ster{l}}$ are evaluated in the same world~$i$, these subexpressions must be distinct in~$s$.
        \item $1 \le |y_1\cdots y_n|$. Since $|w| > |s|$, there must be some subexpression $u = \conc{r^{\ster{l}}}_i$ for some labeled star~$^{\ster{l}}$ and some iteration $i\in\Id$ such that $1 \le |u|$. This means, there is at least one iteration $y = \conc{r}_j$ from $\conc{r^{\ster{l}}}_{i}$ with $1 \le |y|$. We can use this $y$ as one of the $y_1,\dotsc,y_n$ to satisfy $1 \le |y_1\cdots y_n|$.
        \item If one of the substrings $x_0$ or $y_1,\dotsc,y_n$ contains some proper substring $u$ which is evaluated by some $u = \conc{\tilde{r}^{\ster{\tilde{l}}}}_{\tilde{i}}$, then it is empty $|u| = 0$. Otherwise, instead of $y_1,\dotsc,y_n$, we could take $\tilde{y}_1 = \conc{\tilde{r}_1}_{\tilde{j}},\dotsc,\tilde{y}_{\tilde{n}} = \conc{\tilde{r}_{\tilde{n}}}_{\tilde{j}}$ for some iteration $\tilde{j}$ from $\conc{\tilde{r}^{\ster{\tilde{l}}}}_{\tilde{i}}$ to satisfy this condition.
    \end{itemize}

    Now, we can modify the concretization $\concr{}$ as $\concr^k$ with $k\in\N$, such that the iteration with world $j$ is repeated $k$ times in $\conc{^{\ster{l}}}^k_{(i)}$.
    Thus, $\concr^k$ is no longer injective.
    We get $\conc{s}^k = x_0 y_1^k x_1 \cdots y_n^k x_n$, where $x_0,x_1,\dotsc,x_n$ are the substrings before/after/between $y_1,\dotsc,y_n$ in $w$. Hence $x_0 y_1^k x_1 \cdots y_n^k x_n \in \Lang(s)$.

    The number $n$ of these substrings $y_1,\dotsc,y_n$ is at most the number of occurrences of $^{\ster{l}}$ with label $l$ in pure sregex $s$. This is bounded by $q$, the number of occurrences of the same label in $s$.
    Since $n \le q$, we can divide $w$ into $w = x_0 y_0 x_1 \cdots y_q x_q$ with $x_t = \epsilon$ and $y_t = \epsilon$ for $t = n{+}1, n{+}2, \dotsc, q$.
    The pumping lemma follows with $p = |s|$, since the subexpressions $x_0$ and $y_1,\dotsc,y_n$ are distinct in $s$:
    \begin{enumerate}
    \item $1 \le |y_1\cdots y_n| = |y_1\cdots y_n\epsilon \cdots \epsilon| = |y_1\cdots y_q|$
    \item $|x_0| + |y_1\cdots y_q| = |x_0| + |y_1\cdots y_n| + 0 \le |s|$
    \item $x_0 y_1^k x_1 \cdots y_q^k x_q = x_0 y_1^k x_1 \cdots y_n^k x_n \epsilon^k \epsilon \cdots \epsilon \epsilon^k\epsilon = \conc{s}^k \in L$ for all $k\in\N$ \qedhere
    \end{enumerate}
\end{proof}

The next two propositions show how the pumping lemma can be used to show that a language is not pure synchronized.

\begin{proposition}\label{lem:pumproof1}
    $L = \{\mathtt{a}^{n^2}\mid n\in\N\}$ is not pure synchronized.
\end{proposition}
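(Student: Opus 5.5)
We argue by contradiction using the Pumping Lemma (Theorem~\ref{thm:pumping}). Suppose $L = \{\mathtt{a}^{n^2}\mid n\in\N\}$ is pure synchronized, and let $p,q\in\N$ be the constants from Theorem~\ref{thm:pumping}. Since the alphabet used by $L$ is unary, only the lengths of the pieces of a decomposition matter: for any $w = x_0 y_1 x_1\cdots y_q x_q$ we have $|x_0 y_1^k x_1 \cdots y_q^k x_q| = |w| + (k-1)\,d$, where $d = |y_1\cdots y_q|$. The lemma guarantees $1 \le d \le p$.

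We choose $w = \mathtt{a}^{n^2}$ with $n > p$, so that $|w| = n^2 > p$ and the lemma applies. Take $k=2$. The pumped word $x_0 y_1^2 x_1\cdots y_q^2 x_q$ must lie in $L$, and it has length $n^2 + d$. From $1\le d\le p<n$ we get
\begin{align*}
n^2 < n^2 + d \le n^2 + p < n^2 + 2n + 1 = (n+1)^2 .
\end{align*}
So $n^2+d$ lies strictly between two consecutive squares and is not a perfect square. Hence $\mathtt{a}^{n^2+d}\notin L$, which contradicts item~3 of Theorem~\ref{thm:pumping}. Therefore $L$ is not pure synchronized.

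Apart from choosing $w$, there is no real obstacle. The point to notice is that the multi-segment structure $y_1,\dotsc,y_q$, which makes this pumping lemma weaker than the classical one, does no harm in the unary case: all that matters is the total length increment $d$. This increment is bounded by $p$ through item~2 and is nonzero by item~1. Choosing $n>p$ ensures the increment is smaller than the gap $2n+1$ between consecutive squares.
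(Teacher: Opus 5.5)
Your proof is correct and follows the paper's argument: both apply the pumping lemma (Theorem~\ref{thm:pumping}) and note that in the unary case the pumped length lands strictly between two consecutive squares. The only difference is cosmetic. The paper takes $w=\mathtt{a}^{(p+1)^2}$ and pumps down with $k=0$, landing strictly between $p^2$ and $(p+1)^2$; you instead pump up with $k=2$ from $\mathtt{a}^{n^2}$ with $n>p$, landing strictly between $n^2$ and $(n+1)^2$.
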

\begin{proof}
    Assume $L$ would be pure synchronized with constants $p,q$ for Theorem~\ref{thm:pumping}. We observe that $w = \mathtt{a}^{(p+1)^2} \in L$. Now we divide $w$ into $w = x_0 y_1 x_1 \cdots y_q x_q$ with $1 \le |y_1\cdots y_q| \le p$ such that $x_0 y_1^k x_1 \cdots  y_q^k x_q \in L$ for all $k\in\N$. We compose the word $w_0 = x_0 y_1^0 x_1 \cdots y_q^0 x_q = x_0 x_1 \cdots x_q$. We observe that $|w_0| = |w| - |y_1\cdots y_q| = (p+1)^2 - |y_1\cdots y_q|$, hence $|w_0| < (p+1)^2$ and $|w_0| \ge (p+1)^2 - p = p^2+p+1 > p^2$. Since $w_0 \in L$, we have $|w_0| = |\mathtt{a}^{n^2}| = n^2$ for some $n\in\N$. However, there is no such $n\in\N$ with $p^2 < n^2 < (p+1)^2$.
\end{proof}

\begin{proposition}\label{lem:pumproof2}
    $L = \{(\mathtt{ab}^m)^n\mid n,m\in\N\}$ is not pure synchronized.
    \begin{proof}
    Assume $L$ would be pure synchronized with some constants $p,q$ for Theorem~\ref{thm:pumping}. We observe that $w = (\mathtt{ab}^{p+1})^{p+1} \in L$. The rest is left to the reader.
    \end{proof}
\end{proposition}

We will see in Section~\ref{sec:extended}, that the languages from Propositions~\ref{lem:pumproof1} and~\ref{lem:pumproof2} can be generated by a \emph{full sregex}. Hence, these languages are \emph{full synchronized}.

\subsection{Partial Concretization}

For the practical usage of concretizations, we have a problem: each concretization needs (countably) infinitely many definitions $\conc{\altern_l}_{(i)}$ and $\conc{^{\ster{l}}}_{(i)}$ for all $i\in\Id$ and $l\in\Lb$. Propositional logic with an infinite set of propositional variables $p_1,p_2,\dotsc$ has a similar problem, and the solution is using \emph{partial} valuations. We do the same for concretizations.

\begin{definition}
A partial concretization $\C = (\Id,\di, \pconcr)$ consists of partial functions for any operator with label $l\in\Lb$:
\begin{align*}
\pconc{\altern_l}&\colon \Id \pto \{ \alternleft, \alternright \}
&
\pconc{^{\ster{l}}} &\colon \Id \pto \Id^*
\end{align*}
The size of a partial concretization is measured by $\size{\C} = \size{\Id}$.
\end{definition}

Analogous to Definition~\ref{def:conc} we can construct the partial concretization $\pconc{s}$ of an sregex $s$. This is only defined, when the respective partial functions are defined.

Partial valuations in propositional logic have a small model property: to evaluate a given formula, we only need the partial valuation restricted to the propositional variables occurring in the formula. As the following example shows, such a restriction approach does not work for pure sregexes, since the number of worlds can get very large.

\begin{example}\label{ex:bigconc}
    Given is the sregex $s = (a^{\ster{1}})^{\ster{2}}$, corresponding to the regular expression $(a^*)^*$, integers $n,N\in\N$ with $N>n$, and the partial concretization $\C = (\{i_0,i_1,\dotsc,i_N\}, i_0, \pconcr)$ with:
    \begin{align*}
    \pconc{^{\ster{2}}}_{(i_0)} &= (i_1,\dotsc,i_N) \\
    \pconc{^{\ster{1}}}_{(i_1)}&= (i_1,\dotsc,i_n)\quad \text{ and }\quad \pconc{^{\ster{1}}}_{(i)}= () \quad \text{for $i=i_2,\dotsc, i_N$}
    \end{align*}
    We obtain $\pconc{(a^{\ster{1}})^{\ster{2}}} = a^n$. We get $\size{\C} = N+1$. If we take, for example, $N= 2^n$, this means that the size of the concretization is exponential to the length of the concretized word $|a^n| = n$. 
    However, we can create a second concretization $\C' = (\{i_0,\dotsc,i_n\}, i_0, \pconcr')$ with $\pconc{^{\ster{2}}}_{(i_0)}' = (i_1)$ and $\pconc{^{\ster{1}}}_{(i_1)}'=(i_1,\dotsc,i_n)$. We get $\pconc{s}' = a^n = \pconc{s}$ and $\size{\C'} = n + 1$.
    To be specific, we removed the worlds $i_{n+1},\dotsc,i_N$ and kept only $(i_1)$ in $\pconc{^{\ster{2}}}_{(i_0)}$, because $\pconc{a^{\ster{1}}}_i = \epsilon$ for all $i=i_2,\dotsc,i_N$.
\end{example}

The following theorem shows that we can always find a better partial concretization with a size polynomial in the sregex and the concretized word:

\begin{theorem}\label{thm:pconcPoly}
Given a pure sregex $s$ and a concretization $\C = (\Id, \di, \concr)$ with $w= \conc{s}$, there is a partial concretization $\C' = (\Id', \di', \pconcr)$ with $\pconc{s} = w = \conc{s}$ such that
$\size{\C'} = O(|s| \cdot |w|)$, linear in both the length of the word $|w|$ and of the sregex~$|s|$.
\end{theorem}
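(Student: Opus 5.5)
First, I make the concretization acyclic and injective. By Lemma~\ref{lem:acyc} followed by Lemma~\ref{lem:injfull}, I may assume that $\C$ is acyclic and injective without changing $\conc{s}$. Then the evaluation of $\conc{s}$ unfolds as a tree, the \emph{evaluation tree}. Each node is a pair $(r,i)$ of a subexpression occurrence $r$ of $s$ and the world $i$ in which it is evaluated. A star node $(r^{\ster{l}},i)$ has one child $(r,j_t)$ for each $j_t$ in $\conc{^{\ster{l}}}_{(i)}$. Every other node has children only for its syntactic subexpressions, all in the same world, and an alternation node has only the chosen branch. Injectivity gives a useful correspondence: for each world $j\neq\di$ there is a unique parent world $i$ and a unique label $l$ with $j$ occurring in $\conc{^{\ster{l}}}_{(i)}$. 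However, $j$ may be used by several star occurrences with label $l$ that are evaluated in $i$. So a single world carries the evaluation of up to $|s|$ subexpression occurrences.

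Next I prune. I call a world $j$ \emph{productive} if it contributes at least one letter of $w$, that is, if some node $(r,j)$ in the tree has $\conc{r}_j\neq\epsilon$. I also call $\di$ productive. I build $\C'$ bottom-up.
\begin{itemize}
\item The worlds $\Id'$ are the productive worlds, plus at most one extra world $e$ (discussed below).
\item For productive $i$, I set $\pconc{\altern_l}_{(i)}=\conc{\altern_l}_{(i)}$ whenever an alternation with label $l$ is evaluated in $i$.
\item For productive $i$, I let $\pconc{^{\ster{l}}}_{(i)}$ be the subsequence of $\conc{^{\ster{l}}}_{(i)}$ obtained by deleting the non-productive worlds.
\end{itemize}
All other values are left undefined. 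This is exactly the step of Example~\ref{ex:bigconc}, where worlds $i_{n+1},\dotsc,i_N$ were removed.

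The key lemma is proved by structural induction on subexpressions, well-founded because the tree is acyclic. It states that for every productive world $i$ and every subexpression $r$ evaluated at $i$ in the original tree, $\pconc{r}_i$ is defined and equals $\conc{r}_i$.
\begin{itemize}
\item Concatenation is immediate.
\item Alternation follows because we copied the original choices.
\item For a star $r^{\ster{l}}$ evaluated in $i$, every deleted world $j$ is non-productive, so $\conc{r}_j=\epsilon$ and dropping it does not change the product. Every kept world is productive, so the induction hypothesis applies to it.
\end{itemize}

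The main obstacle is that a kept world $j$ is productive for one occurrence, say $r_1^{\ster{l}}$, but may be evaluated by a sibling occurrence $r_2^{\ster{l}}$ with $\conc{r_2}_j=\epsilon$. Inside $r_2$ there may be subexpressions that still need defined values at $j$, such as stars whose iterations all live in non-productive worlds. My fix is to keep, for such stars, the truncated sequence. An $\epsilon$-producing star gets the empty sequence $()$ if its subexpression can be emptied this way, and otherwise a single representative world. This may also require $\emptyset$-free choices, which I copy from the original. I expect to handle this by a second induction showing that every subtree with empty output can be re-realized using only $O(|s|)$ extra worlds along its unique nonempty-free path. Merging all such representatives into one shared world $e$ is impossible in general because of conflicting choices. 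Instead I charge each needed witness world to the productive world whose evaluation required it. Each productive world has at most $|s|$ subexpression nodes, so it needs at most $O(|s|)$ such witnesses.

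Finally, I count. Every non-initial productive world has a descendant node producing a letter of $w$, and each letter of $w$ originates from exactly one leaf node $(\mathtt a,j)$. The productive worlds are therefore exactly the worlds on the root-to-leaf paths of $|w|$ leaves. Such a path passes through at most the star-nesting depth of $s$, which is at most $|s|$, many worlds. Hence there are at most $1+|s|\cdot|w|$ productive worlds. The witness worlds add at most $O(|s|)$ per productive world along nonempty paths, and the leaf-path bound is where I would try to make this charging give $O(|s|\cdot|w|)$ rather than $O(|s|^2|w|)$. Together this gives $\size{\C'}=O(|s|\cdot|w|)$.
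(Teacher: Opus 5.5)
Your construction is the paper's: make $\C$ acyclic and injective, delete from each $\conc{^{\ster{l}}}_{(i)}$ the iterations whose removal does not change the output (your non-productive worlds are exactly these), keep only the values still used, and count. Your key lemma and its inductive proof are already complete. The ``main obstacle'' you then raise does not exist, and only this detour leaves your final count hedging between $O(|s|\cdot|w|)$ and $O(|s|^2|w|)$.

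The lemma ranges over \emph{all} nodes $(r,j)$ of the original tree with $j$ productive, including nodes inside a sibling body $r_2$ with $\conc{r_2}_j=\epsilon$. Your construction defines $\pconc{\altern_{l'}}_{(j)}$ and $\pconc{^{\ster{l'}}}_{(j)}$ at every productive $j$. A star $t^{\ster{l'}}$ inside $r_2$ just gets the pruned sequence. If all its iterations are non-productive, this is $()$, and then $\pconc{t^{\ster{l'}}}_j=\epsilon$ whatever $t$ is (even if $t$ contains $\emptyset$). Kept iterations are productive and covered by the induction hypothesis. So no representative or witness worlds and no extra world $e$ are needed. $\Id'$ is exactly the set of productive worlds, and your leaf-path count gives $\size{\C'}\le 1+h\cdot|w|\le 1+|s|\cdot|w|$ outright, with $h$ the star height of $s$. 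Delete that paragraph.

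The genuine difference from the paper is the counting. The paper bounds, for each label $l$, the total length of all $\pconc{^{\ster{l}}}_{(i)}$ by $|w|$, arguing that the nonempty pieces $\pconc{r}_j$ do not overlap in $w$. It then sums over the at most $|s|$ labels. You instead charge each kept world to a letter of $w$ generated below it in the world tree, and each letter has at most $h$ non-initial ancestor worlds. Your charging is sharper and more robust. The non-overlap argument is valid only for stars evaluated in the same world, and the per-label inequality fails for nested stars with the same label. For example, for $(\mathtt{a}^{\ster{1}})^{\ster{1}}$ and $w=\mathtt{a}$, label $1$ has two kept iteration entries while $|w|=1$. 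The paper's final bound $1+|s|\cdot|w|$ is still true, and your argument is a clean way to justify it.
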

\begin{proof}
This is also a constructive proof.
We start with $\C$ and construct an acyclic, injective concretization as in Lemma~\ref{lem:injfull}.

Second, we restrict $\C$ to a finite partial concretization.
After the first step, we can assume that $\C$ is acyclic and injective.
When we construct $w= \conc{s}$ using Definition~\ref{def:conc}, observe that we finish in a finite number of construction steps, because each step strictly decreases the formula size of the descendants, and there are only finitely many descendants. Hence, we only need to have $\pconcr$ defined for the finite set used during the construction of $\conc{s}$ and get $\pconc{s} = \conc{s}$.
This partial concretization is finite but not necessarily small.

Step three is for each $(j_1,\dotsc,j_n) = \pconc{^{\ster{l}}}_{(i)}$ to remove all worlds $j_k$ that are not necessary, as long as the concretization of $s$ stays the same $\pconc{s} = w = \conc{s}$. It is sufficient to check that
\begin{align*}
\pconc{r^\ster{l}}_i = \pconc{r}_{j_1} \cdots \pconc{r}_{j_n} \stackrel{?}{=} \pconc{r}_{j_1} \cdots \overbrace{\quad\epsilon\quad}^{{\text{no }\pconc{r}_{j_k}}} \cdots \pconc{r}_{j_n}
\end{align*}
for all subexpression $r$ such that $\pconc{r^\ster{l}}_i$ is used when computing the concretization of $s$. Then we can redefine $\pconc{^{\ster{l}}}_{(i)} = (j_1,\dotsc,j_{k-1},j_{k+1},\dotsc,j_n)$.

Step four is for each label $l$, to set $\pconc{^\ster{l}}_{(i)}$ and $\pconc{\altern_l}_{(i)}$ to undef, when the respective values are no longer needed in the computation of $\pconc{s}$ after step three.

Now we have to check that the size of the partial concretization we constructed is bounded as claimed.
For all labels $l\in\Lb$ and indices $i\in\Lb$, there might be multiple subexpressions $r^{\ster{l}}$ such that $\pconc{r^{\ster{l}}}_i$ is used to compute $\pconc{s}$.
However, we observe with the uniqueness obtained in step one (acyclic+injective), that these subexpressions must be pairwise distinct in $s$. For example, we need $\pconc{a^{\ster{1}}}$, $\pconc{b^{\ster{1}}}$ and $\pconc{a^{\ster{1}}}$ to compute $\pconc{a^{\ster{1}}b^{\ster{1}}a^{\ster{1}}}$ and each subexpression has a pairwise distinct position in the expression.
Thus, also the respective part in the matching word $w$ might not overlap.
For each label $l\in\Lb$, there can be at most $|w|$ subexpressions $r^{\ster{l}}$ and iterations $j$, such that $\pconc{r}_{j} \neq \epsilon$ is used to compute $w = \pconc{s}$ (pigeonhole principle). Because of step three, this means that the number of iterations $j$ for each label $l\in\Lb$ is bounded by $|w|$, i.e.\ $\sum \{\mathit{len}(\pconc{^{\ster{l}}}_{(i)})\mid i\in\Id'\text{ with } \pconc{^{\ster{l}}}_{(i)} \text{ defined}\} \le |w|$.
Here, we use the length of a tuple $\mathit{len}((i_1,\dotsc,i_n)) = n$.

Now we can give an upper bound for the size of $\C'$:
\begin{align*}
        \size{\C'}\quad 
       = &\quad \size{\Id'} \\
       \le &\quad \size{\{\di'\}} + \sum \{\mathit{len}(\pconc{^\ster{l}}_{(i)})  \mid l\in\Lb,i\in \Id' \text{ with } \pconc{^\ster{l}}_{(i)} \text{ defined} \} \\
       \le &\quad 1 + \sum \{ |w| \mid l \in \Lb \text{ such that there is a subexpression } r^{\ster{l}} \text{ of }s \} \\
       \le&\quad 1+ \sum \{ |w| \mid r \text{ subexpresion of }s \} 
       \quad\le\quad 1+ |s|\cdot|w|
       \qedhere
\end{align*}
\end{proof}

From Theorem~\ref{thm:pconcPoly}, we can draw two conclusions: First, if we have some partial concretization with more than polynomial size in relation to the size of the sregex and concretized word, it is because we have numerous iterations $s_i$ of subexpression $s^{*_k}$ that simply result in the empty word $\epsilon$. Second, we can use this to prove that matching a word $w$ against an sregex $s$ is in the complexity class NP, see Section~\ref{sec:symregCompl}.

\section{Full Synchronized Regular Expressions}\label{sec:extended}

In practical regex applications,
 \emph{backreferences} are popular, which are partially handled already by sregexes with matching indices. But our approach does not work for backreferencing inside a Kleene star from outside or vice versa. We might want to generate languages like $\{(\mathtt{ab}^m)^n\mid m,n\in \N\}$, where all inner $\mathtt{ab}^*$ should match the same number of $b$ in all iterations. For a clean integration into our syntax and semantics, we introduce an \emph{uptransition} operator $(s)_\uparrow$. The following extends Definition~\ref{def:symindex}:

\begin{definition}{}\label{def:itergroups}
Given an alphabet $\Sigma$ and a set of labels $\Lb$, \emph{full sregexes over $\Sigma$} are inductively defined as all characters $s\in\Sigma$, the \emph{empty word~$\epsilon$}, the \emph{empty language~$\emptyset$}, 
concatenations $rs$,
labeled alternations $r \altern_l s$,
and labeled Kleene stars $r^{\ster{l}}$,
and \emph{uptransitions $(r)_{\uparrow}$}
for all full sregexes $r$ and $s$ and all labels $l\in\Lb$.
The uptransition has the strongest binding, so $(r)_\uparrow^\ster{l} = ((r)_\uparrow)^\ster{l}$.
\end{definition}

If we use one or multiple $\uparrow$, each uptransition overwrites the current world with the respective previous world. For example, $(\mathtt{a}(\mathtt{b}^{\ster{1}})_{\uparrow})^{\ster{3}}$ matches the language $\{(\mathtt{ab}^m)^n\mid m,n\in \N\}$. Or more compactly, with the equivalent expression $(\mathtt{ab}^{\ster{1}})_{\uparrow}^{\ster{1}}$.

When combining groups with uptransitions within star and without uptransitions outside, we can enforce the same world context within and outside. For example, $(\mathtt{ab}^{\ster{1}}) (\mathtt{ab}^{\ster{1}})_{\uparrow}^{\ster{2}}$ matches the same number of $\mathtt{b}$s outside and in each $^{\ster{2}}$ iteration.

For the semantics, concretizations need to keep track of outer iterations, too. The trick is to use full, injective concretizations only. This is not a problem, as for pure sregexes we could always construct equivalent full, injective concretizations by Lemma~\ref{lem:injfull}.

\begin{definition}{} \label{def:concgroup}
Given a full, injective concretization $\C = (\Id,\di,\concr)$ and a full sregex s, the concretization of $s$ in world $i\in\Id$ is defined by:
\begin{align*}
\conc{sr}_{i} &= \conc{s}_{i} \conc{r}_{i} \\
\conc{s \altern_l r}_{i} &= 
   \conc{s}_{i}  \conc{\altern_l}_{(i)} \conc{r}_{i}
\\
\conc{s^{\ster{l}}}_{i} &= \conc{s}_{j_1}  \conc{s}_{j_2} \cdots \conc{s}_{j_n} && \text{with $(j_1,j_2,\dotsc,j_n) = \conc{^{\ster{l}}}_{(i)}$} \\
\conc{(s)_{\uparrow}}_{i} &= \conc{s}_{j} && \text{with $(\dotsc,i,\dotsc) = \conc{^{\ster{l}}}_{(j)}$} \\
\conc{\mathtt{a}}_{i} &= \mathtt{a} && \text{for any letter $\mathtt{a}\in\Sigma$}\\
\conc{\epsilon}_i &= \epsilon\\
\conc{\emptyset}_i &= \text{\textnormal{undef}}
\end{align*}
We define the \emph{concretized word of $s$ with $\concr{}$} as $\conc{s} = \conc{s}_{\di}$.
\end{definition}

Besides requiring full and injective concretizations, the only difference to Definition~\ref{def:conc} is the addition of the uptransition $\conc{(s)_{\uparrow}}_{i} = \conc{s}_{j}$. You can see it as a stack. While the star operation $^\ster{l}$ adds a new world $j_k$ to the stack, the uptransition $_{\uparrow}$ pops one world from the stack and goes back.
For all other operations, the current world is just carried through.

\begin{proposition}
Definition~\ref{def:concgroup} is well-defined.
\end{proposition}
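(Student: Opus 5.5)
The only new clause in Definition~\ref{def:concgroup} is the uptransition; every other clause is identical to Definition~\ref{def:conc} and is already well-defined. So the plan is to show that the world $j$ in
\[
\conc{(s)_{\uparrow}}_{i} = \conc{s}_{j} \quad\text{with } (\dotsc,i,\dotsc) = \conc{^{\ster{l}}}_{(j)}
\]
exists and is unique. After that, a routine structural argument shows that the resulting equations determine a unique value for every $\conc{s}_i$.

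\textbf{Existence.} This follows from fullness. By Definition~\ref{def:inj} the graph $G$ of $\C$ is surjective, so every world $i$ has some incoming edge $(j,i)\in E$. By the definition of $E$, this means there are a label $l$ and a world $j$ with $i$ occurring in the tuple $\conc{^\ster{l}}_{(j)}$.

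\textbf{Uniqueness.} This follows from injectivity: no two distinct worlds $j_1\neq j_2$ have edges to the same $i$. Hence the predecessor $j$ is uniquely determined, and we may write $j=\mathrm{pred}(i)$. The label $l$ is irrelevant, because the clause only uses $j$. So the clause defines the total function $\mathrm{pred}\colon\Id\to\Id$. I would note in passing that full and injective concretizations exist for any enumerable $\Id$ by Lemma~\ref{lem:injfull}, so the definition is not vacuous.

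\textbf{Well-foundedness.} This is the main obstacle. Full concretizations contain infinite backward chains (compare the worlds $i_{-1},i_{-2},\dotsc$ in the proof of Lemma~\ref{lem:injfull}), and the world index can move both down (via stars) and up (via uptransitions). So induction on worlds is not available. Instead I would do structural induction on the full sregex $s$, proving that for \emph{every} world $i$ the value $\conc{s}_i$ is uniquely determined, either as a word or as undef. Each clause expresses $\conc{s}_i$ purely in terms of values $\conc{s'}_{i'}$ for proper subexpressions $s'$ of $s$, at worlds $i'$ that are themselves uniquely determined:
\begin{itemize}
\item for the star, $i'\in\{j_1,\dotsc,j_n\}$, a finite tuple;
\item for the uptransition, $i'=\mathrm{pred}(i)$;
\item for concatenation and alternation, $i'=i$.
\end{itemize}
Since the statement "for all $i$" is part of the induction hypothesis, the changes of world cause no difficulty. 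A finite concatenation of determined words is determined, with the convention that undef propagates through concatenation. The alternation clause is determined by the choice $\conc{\altern_l}_{(i)}$. Hence $\conc{s}_i$ is uniquely defined for all $s$ and $i$, and in particular $\conc{s}=\conc{s}_{\di}$ is well-defined.
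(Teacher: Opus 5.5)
Your proof is correct and follows the paper's argument. Fullness (surjectivity of the graph) gives existence of the predecessor world $j$, and injectivity gives its uniqueness, which is exactly what the paper checks. Your structural induction on $s$ over all worlds $i$, showing that the clauses determine a unique value even though worlds move both down and up, is a sound extra step that the paper leaves implicit, as is your remark that the label $l$ in the uptransition clause need not be unique.
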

\begin{proof}
 For well-definedness of $\conc{(s)_\uparrow}_i$, we check that $j$ with $(\dotsc,i,\dotsc) = \conc{^\ster{l}}_{(j)}$ exists and is unique. Because the concretization is full, its graph is surjective, so there must be such a $j\in\Id$ with $(j,i)\in E$. And because the concretization is injective, there can only be one $j \in \Id$ with $(j,i)\in E$.
\end{proof}

Analogously to sregexes, we can define injective and partial concretizations, and the size of partial concretizations. Only for the pumping lemma (Theorem~\ref{thm:pumping}), we do not have a working version with iteration groups yet. Instead of Theorem~\ref{thm:pconcPoly} on partial concretizations, we now get a quadratic bound on the size of a full sregex $|s|$, see below. With that, we also get the same complexity results as in Section~\ref{sec:symregCompl}.

\begin{theorem}\label{thm:pconcPolyFull}
Given a full sregex $s$ and a full, acyclic, injective concretization $\C = (\Id, \di, \concr)$ with $w= \conc{s}$, there is a partial concretization $\C' = (\Id', \di', \pconcr)$ with $\pconc{s} = w = \conc{s}$ such that
$\size{\C'} = O(|s|^2 \cdot |w|)$, linear in $|w|$ and quadratic in $|s|$.
\end{theorem}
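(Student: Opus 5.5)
We follow the four-step scheme from the proof of Theorem~\ref{thm:pconcPoly}, adapting the counting to uptransitions. Since $\C$ is already full, acyclic and injective, step one is free; fullness is needed only so that every $(r)_\uparrow$ has a unique parent world. Step two restricts $\C$ to the finitely many worlds and function values actually consulted while computing $\conc{s}$ by Definition~\ref{def:concgroup}. Each evaluation step either descends into a proper subexpression or, for $(r)_\uparrow$, moves to the parent world with the strictly smaller body $r$. Hence the computation terminates. We keep only visited worlds together with their ancestor chains, so that the parent lookups stay well-defined in the partial concretization. Step three deletes iterations $j_k$ from $\pconc{^\ster{l}}_{(i)}$ whenever every subexpression $r^\ster{l}$ evaluated in world $i$ yields $\epsilon$ on $j_k$. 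Step four undefines all values no longer used.

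For the size bound we need to count, for each label $l$, the surviving iterations. In the pure case the key fact was that a given pair (subexpression occurrence, world) is evaluated at most once. This let the nonempty contributions be charged injectively to positions of $w$, giving at most $|w|$ iterations per label. With uptransitions this fails. For example, in $(\mathtt{ab}^\ster{1})_\uparrow^\ster{2}$ the world $\di$ is re-entered from every iteration of $^\ster{2}$, so the same occurrence $\mathtt{b}^\ster{1}$ is evaluated in the same world many times. I would instead argue that a world $i$ at stack depth $d$ can be re-entered only through an uptransition occurrence lying syntactically inside some star occurrence that created a child of $i$. I would then charge each surviving iteration $j$ of $\pconc{^\ster{l}}_{(i)}$ to a pair (uptransition or star occurrence in $s$, position of $w$) produced by a nonempty evaluation of $\pconc{r}_j$. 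There are at most $|s|$ occurrences through which the evaluation can arrive at world $i$. By step three each kept $j$ contributes at least one letter in at least one such arrival. Distinct arrival paths produce disjoint segments of $w$. So per label the number of kept iterations is at most $|s|\cdot|w|$. Summing over the at most $|s|$ labels occurring in $s$, as in the final estimate of Theorem~\ref{thm:pconcPoly}, gives $\size{\C'} \le 1 + |s|^2\cdot|w|$.

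The main obstacle is making this charging argument rigorous. Concretely, the claim is that although a world may be evaluated repeatedly, any two evaluations of the same subexpression occurrence in the same world either occupy disjoint parts of $w$ or are distinguished by the occurrence of $(\cdot)_\uparrow$ or $^\ster{l}$ through which they were reached. I would first try to prove this by induction on the evaluation tree. The invariant would be that each node of the tree is labelled by a pair (occurrence, world), and each such pair has at most $|s|$ nodes, one per syntactic "entry point". Acyclicity and injectivity guarantee that the ancestor chain of a world is unique. If that invariant proves too crude, the fallback is to bound directly the number of distinct pairs (occurrence, world) with nonempty output by $|s|\cdot|w|$ and then multiply by $|s|$ for the choice of label.
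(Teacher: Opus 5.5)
Your skeleton is the paper's: the same four steps, a per-label count of $|s|\cdot|w|$ surviving iterations, then a sum over at most $|s|$ labels. The paper itself only invokes ``pigeonhole'' for that count. However, the justifications you give for it are false, so the step that carries the theorem is not established.

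\begin{itemize}
\item The invariant that each pair (occurrence, world) labels at most $|s|$ nodes, one per entry point, fails on your own example $(\mathtt{ab}^{\ster{1}})_\uparrow^{\ster{2}}$. There the pair $(\mathtt{b}^{\ster{1}},\di)$ labels one node per iteration of $^{\ster{2}}$, and all of them are entered through the same uptransition occurrence.
\item The claim that ``distinct arrival paths produce disjoint segments'' fails too. In the paper's example $((\mathtt{a}(\mathtt{b}^{\ster{1}})_\uparrow)^{\ster{1}})^{\ster{2}}$, a world $j$ entered through $^{\ster{2}}$ is re-entered through the uptransition from each of its children, and those later arrivals lie inside the first one.
\end{itemize}
So the injectivity of your charging map is unproven.

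The missing lemma is the observation you only used for termination. Every clause of Definition~\ref{def:concgroup}, including $\conc{(r)_\uparrow}_i=\conc{r}_j$, passes to a proper syntactic subterm. Hence two nodes of the evaluation tree carrying the same occurrence are never in ancestor--descendant relation, and they yield disjoint factors of $w$. In particular, arrivals through one fixed occurrence are pairwise disjoint; that is the correct form of your charging claim.

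This gives your fallback directly:
\begin{itemize}
\item For each occurrence $u$ of $s$, let $W_u$ be the set of worlds in which $u$ evaluates to a nonempty word. Taking one node per world shows $|W_u|\le|w|$.
\item After step three, every world surviving in some $\pconc{^{\ster{l}}}_{(i)}$ lies in $W_r$ for some $r^{\ster{l}}$ evaluated at $i$, apart from the chain links discussed below.
\item So, over all labels at once, at most $\sum_u|W_u|\le|s|\cdot|w|$ worlds survive. No extra factor $|s|$ for labels is needed.
\end{itemize}

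Two repairs to the construction are also required.
\begin{itemize}
\item In a full acyclic concretization every ancestor chain is infinite, so ``visited worlds together with their ancestor chains'' is an infinite set. Keep only visited worlds; the visited proper ancestors of $\di$ number at most the nesting depth of uptransitions.
\item Exempt the links of that chain from step three. Its criterion is vacuous at a parent where no star of the relevant label is evaluated. For $s=(\mathtt{a})_\uparrow$ it removes $\di$ from its parent's sequence and leaves $\pconc{s}$ undefined.
\end{itemize}
Altogether $\size{\C'}\le 1+|s|+|s|\cdot|w|$, which is within the claimed bound.
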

\begin{proof}
All four construction steps for $\C'$ are the same as in the proof of Theorem~\ref{thm:pconcPoly}.
Now we have to check that the size of the partial concretization we constructed is bounded as claimed.

For all labels $l\in\Lb$ and indices $i\in\Lb$, there might be multiple subexpressions $r^{\ster{l}}$ such that $\pconc{r^{\ster{l}}}_i$ is used to compute $\pconc{s}$.
Even though the concretization is acyclic and injective, these subexpressions may overlap in $s$. For example, we need $\pconc{b^{\ster{1}}}$ and $\pconc{(a(b^{\ster{1}})_\uparrow)^\ster{1}}$ to compute $\pconc{((a(b^{\ster{1}})_\uparrow)^\ster{1})^\ster{2}}$, both in the same worlds because of the uptransition.
Thus, for each label $l\in\Lb$, there can be at most $|s|\cdot|w|$ subexpressions $r^{\ster{l}}$ and iterations $j$, such that $\pconc{r}_{j} \neq \epsilon$ is used to compute $w = \pconc{s}$ (pigeonhole principle). Because of step three, this means that the number of iterations $j$ for each label $l\in\Lb$ is bounded by $|s|\cdot |w|$, i.e.\ $\sum \{\mathit{len}(\pconc{^{\ster{l}}}_{(i)})\mid i\in\Id'\text{ with } \pconc{^{\ster{l}}}_{(i)} \text{ defined}\} \le |s|\cdot |w|$.

Now we can give an upper bound for the size of $\C'$:
\begin{align*}
        \size{\C'}\quad 
       = &\quad \size{\Id'} \\
       \le &\quad \size{\{\di'\}} + \sum \{\mathit{len}({\pconc{^\ster{l}}_{(i)}})  \mid l\in\Lb,i\in \Id' \text{ with } \pconc{^\ster{l}}_{(i)} \text{ defined} \} \\
       \le &\quad 1 + \sum \{ |s|\cdot |w| \mid l \in \Lb \text{ such that there is a subexpression } r^{\ster{l}} \text{ of }s \} \\
       \le&\quad 1+ \sum \{ |s|\cdot|w| \mid r \text{ subexpresion of }s \} 
       \quad\le\quad 1+ |s|^2\cdot|w|
       \qedhere
\end{align*}
\end{proof}

\section{Computational Complexity of the Word Problem}\label{sec:symregCompl}

For NP-completeness, we show NP-hardness first. The next lemma is proven similar to a proof for the regular expressions with backreferences in \cite{PerlNP}.

\begin{lemma}\label{lem:NPhard}
    Given an sregex $s$ and a word $w\in\Sigma^*$, deciding $w\in\Lang(s)$ is NP-hard.
\end{lemma}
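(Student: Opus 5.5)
I will give a polynomial-time reduction from 3-SAT, following the backreference construction of \cite{PerlNP}, but using synchronized alternations in place of backreferences. Since alternations with the same label and evaluated in the same world must make the same choice, each propositional variable $x_m$ can be represented by a label $m$, with choice $\alternleft$ meaning ``true'' and $\alternright$ meaning ``false''. The whole expression will contain no Kleene stars at all. Hence every operator is evaluated in the initial world $\di$, and a concretization restricted to $s$ is exactly a truth assignment to the labels that occur.

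Let $\varphi = c_1 \wedge \dotsb \wedge c_k$ be a 3-CNF formula over variables $x_1,\dotsc,x_N$. For each literal I define a small gadget: the positive literal $x_m$ becomes $(\mathtt{a} \altern_m \epsilon)$ and the negative literal $\neg x_m$ becomes $(\epsilon \altern_m \mathtt{a})$. The gadget produces an $\mathtt{a}$ exactly when the literal is true under the assignment. For a clause $c_t = \ell_1 \vee \ell_2 \vee \ell_3$, I concatenate its three gadgets to get $g_t = g(\ell_1) g(\ell_2) g(\ell_3)$. This yields $\mathtt{a}^{e}$, where $e\in\{0,1,2,3\}$ is the number of true literals in $c_t$. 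To accept the clause exactly when $e\ge 1$ and still produce a fixed-length word, I pad with a fresh label $u_t$ used nowhere else: $g_t \, (\epsilon \altern_{u_t} \mathtt{a}) (\epsilon \altern_{u_t'} \mathtt{a})$. With two fresh, independent padding gadgets, the clause block can produce $\mathtt{a}^{3}$ if and only if $1 \le e \le 3$. To stop the counts of different clauses from mixing, I separate the blocks by a delimiter letter $\mathtt{b}$. The final expression is $s = \text{block}_1\,\mathtt{b}\,\text{block}_2\,\mathtt{b} \cdots \mathtt{b}\,\text{block}_k$, and the target word is $w = (\mathtt{aaa}\mathtt{b})^{k-1}\mathtt{aaa}$. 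Both $s$ and $w$ have size linear in $|\varphi|$.

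Correctness splits into two directions. If $\varphi$ is satisfied by an assignment $\beta$, I define a concretization with $\Id=\{\di\}$, set $\conc{\altern_m}_{(\di)}$ according to $\beta(x_m)$, and choose the padding labels so that each block contributes exactly $3 - e_t$ extra letters. Then $\conc{s} = w$. Conversely, suppose $\conc{s} = w$ for some concretization. Since $s$ has no stars, only the values $\conc{\altern_m}_{(\di)}$ matter, and these define an assignment. The delimiters $\mathtt{b}$ occur at fixed positions in $s$, so each block must produce exactly $\mathtt{aaa}$. Each block's padding contributes at most two letters, so $e_t \ge 1$ and every clause is satisfied.

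The main obstacle, though a mild one, is making sure the synchronization semantics really forces a consistent assignment. The labels $m$ occur in different clauses and in both polarities. I have to check from Definition~\ref{def:conc} that all occurrences are evaluated in the same world $\di$, so that they share one choice, while the padding labels stay independent. Using fresh padding labels per clause, and no stars, handles both points directly. The construction is clearly polynomial, so NP-hardness follows.
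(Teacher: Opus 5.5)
Your reduction is correct and is essentially the paper's: it also reduces from 3SAT, with the same literal gadgets $(\mathtt{a}\altern_m\epsilon)$ and $(\epsilon\altern_m\mathtt{a})$ synchronized by variable label, no stars (so everything is evaluated in $\di$), and $\mathtt{b}$-delimited clause blocks. The only difference is the clause gadget. The paper joins the three literal gadgets by a uniquely labeled alternation and targets $(\mathtt{ab})^n$, so a chosen witness literal must emit $\mathtt{a}$; you concatenate the three gadgets and pad with two free optional $\mathtt{a}$'s to reach $\mathtt{aaa}$, which works equally well.
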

    \begin{proof}
        We show this by reduction, as 3SAT is NP-hard.
        Given is a 3CNF formula with $n$ clauses $A = (l_{1,1} \lor l_{1,2} \lor l_{1,3}) \land \cdots \land (l_{n,1}\lor l_{n,2} \lor l_{n,3})$.
        For all $i=1,\dotsc,n$, $k=1,2,3$, the literal $l_{i,k}$ is either $x_j$ or $\lnot x_j$ for some $j=1,\dotsc,n$.
        We construct the word $w = (\mathtt{ab})^n$ and the mixed expression $\alpha = (\alpha_{1,1} \altern \alpha_{1,2} \altern \alpha_{1,3}) \mathtt{b} \dotsc (\alpha_{n,1}\altern \alpha_{n,2} \altern \alpha_{n,3}) \mathtt{b}$ with
        \begin{align*}
        \alpha_{i,k} = \begin{cases} 
        (\mathtt{a} \altern_j \epsilon) &\quad \text{if }  l_{i,k} = x_j \\
        (\epsilon \altern_j \mathtt{a}) &\quad \text{if }  l_{i,k} = \lnot x_j
        \end{cases}
        \end{align*}
        for all $i=1,\dotsc,n$, $k=1,2,3$.
        By adding unique labels to the remaining $\altern$ in $\alpha$, we get an sregex $s$.
        If we find a partial concretization $\pconcr$ with $\pconc{s} = w$, we can construct a model for the formula $A$ with $x_j \iff \pconc{\altern_j}_{(0)} = \alternleft$
        for all $j = 1,\dotsc,n$.
    \end{proof}

\begin{lemma}\label{lem:inNP}
    Given an sregex $s$ and a word $w\in\Sigma^*$, deciding $w\in\Lang(s)$ is in NP.
\end{lemma}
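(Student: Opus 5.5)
The plan is a guess-and-check argument: use Theorem~\ref{thm:pconcPoly} (and Theorem~\ref{thm:pconcPolyFull} for full sregexes) as the polynomial-size certificate. If $w\in\Lang(s)$, then some concretization $\C$ has $\conc{s}=w$. Theorem~\ref{thm:pconcPoly} then gives a partial concretization $\C'$ with $\pconc{s}=w$ and $\size{\C'}\le 1+|s|\cdot|w|$. The certificate is $\C'$ written out explicitly. We list the worlds $\Id'$ as numbers $0,\dotsc,N$ with $N=O(|s||w|)$, with $\di'=0$. For each label $l$ occurring in $s$ and each world $i$ where the partial function is defined, we record the value $\pconc{\altern_l}_{(i)}$ or the tuple $\pconc{^\ster{l}}_{(i)}$.

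First I check that the encoding is polynomial. Only labels occurring in $s$ are needed, so there are at most $|s|$ of them. By the bound $\sum_i \mathit{len}(\pconc{^\ster{l}}_{(i)})\le|w|$ from the proof of Theorem~\ref{thm:pconcPoly}, the star tuples have total length $O(|s||w|)$. The alternation table has at most $|s|\cdot(N+1)$ entries. Each world index takes $O(\log(|s||w|))$ bits, so the certificate has size polynomial in $|s|+|w|$.

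Next, the verifier evaluates $\pconc{s}_{\di'}$ recursively following Definition~\ref{def:conc}, with the uptransition clause of Definition~\ref{def:concgroup} in the full case. It rejects if any required value is undefined or if $\emptyset$ is reached. It also aborts as soon as the partial output exceeds $|w|$ or disagrees with the corresponding prefix of $w$. Otherwise it accepts iff the output equals $w$. Soundness is immediate: a partial concretization that yields $w$ extends to a total one, by defining the missing values arbitrarily, say $\alternleft$ and $()$, and this total concretization still yields $w$.

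The main obstacle is bounding the verifier's running time. The certificate may be cyclic, since the verifier cannot assume the prover is honest, and empty iterations produce no output, so the length check alone does not stop a runaway evaluation. I would handle this by having the verifier first check that the graph of $\C'$ is acyclic, which takes polynomial time, and reject otherwise. Theorem~\ref{thm:pconcPoly} builds $\C'$ from the acyclic, injective concretization of Lemma~\ref{lem:injfull}, so honest certificates pass this check. Given acyclicity, each call is a pair of a subexpression of $s$ and a world. A star call for world $i$ expands at most the listed $\mathit{len}(\pconc{^\ster{l}}_{(i)})$ children, whose total over all $(l,i)$ is polynomial. The nesting depth is bounded by $|s|$. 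Hence the number of evaluation steps is at most $|s|$ times the number of world-expansions, which is polynomial.

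For full sregexes, uptransitions need the parent of a world. After checking injectivity, this parent is read off the graph, and I would treat the missing fullness for unreachable worlds as irrelevant: a root world $j$ with no parent can receive a fresh infinite chain of ancestors, as in Lemma~\ref{lem:injfull}. The polynomial size then comes from Theorem~\ref{thm:pconcPolyFull} instead.
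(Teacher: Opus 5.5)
Your approach is the paper's: guess a partial concretization within the bounds of Theorems~\ref{thm:pconcPoly} and~\ref{thm:pconcPolyFull}, then check it. Your size estimate and the soundness extension are fine. The paper says nothing about the cost of checking, and that is where your argument breaks.

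Plain recursion along Definitions~\ref{def:conc} and~\ref{def:concgroup} does not evaluate each pair of a subexpression and a world once. It evaluates it once per path that reaches it, and empty results never trigger your length/prefix abort. Acyclicity does not help here. Cycles were never the danger, since every clause recurses on a strictly smaller subexpression.

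\emph{Pure case.} The certificate $\pconc{^{\ster{1}}}_{(i_k)}=(i_{k+1},i_{k+1})$ is acyclic, and even injective in the sense of Definition~\ref{def:inj}. For $\epsilon$ under $d$ nested $^{\ster{1}}$ and $w=\epsilon$, it already costs $2^d$ calls, so your stated bound is false.

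\emph{Full case.} Here the failure is fatal, because uptransitions re-enter the same world even in honest certificates. Take $E_0=\epsilon$, $E_k=((E_{k-1})_\uparrow)^{\ster{k}}$, $s=(\mathtt{a}^{\ster{1}}\mathtt{b})\cdots(\mathtt{a}^{\ster{d}}\mathtt{b})\,E_d$ and $w=(\mathtt{aab})^d$. Every accepting certificate has $\pconc{^{\ster{k}}}_{(\di)}$ of length $2$ for every $k$. Each evaluation of $E_k$ at $\di$ evaluates $E_{k-1}$ at $\di$ twice, once per iteration via the uptransition. That gives $2^d$ calls, all yielding $\epsilon$, while $|s|+|w|=O(d)$. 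So no accepting certificate is checked in polynomial time, whether or not you add the acyclicity and injectivity tests.

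The repair is memoization. Evaluate top-down, but store the value of each pair $(r,i)$, where $r$ is an occurrence of a subexpression of $s$ and $i\in\Id'$. Reject as soon as a demanded value is undefined or longer than $|w|$. This is safe because every demanded value is a factor of the final output.

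There are at most $|s|\cdot|\Id'|$ pairs, and each is computed once from stored children. A star pair costs $O(\mathit{len}\cdot|w|)$ and any other pair $O(|w|)$, so checking is polynomial. With this change the acyclicity test is superfluous. In the full case you still need injectivity, to read off unique parents, and the rest of your argument stands.
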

    \begin{proof}
        We non-deterministically choose a partial concretization $\pconcr$ such that $\pconc{s} = w$ holds. We use the polynomial bound of Theorem~\ref{thm:pconcPoly} (for pure sregex) and Theorem~\ref{thm:pconcPolyFull} (for full sregex). If there is a concretization $\concr$, the algorithm finds such a partial concretization within the bound. If there is no such concretization, the algorithm fails. Hence, the algorithm runs in non-deterministic polynomial time.
    \end{proof}

\begin{theorem} \label{thm:recognize}
    Given an sregex $s$ and a word $w\in\Sigma^*$, deciding $w\in\Lang(s)$ is NP-complete.
\end{theorem}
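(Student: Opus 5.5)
The theorem follows by combining the two preceding lemmas: Lemma~\ref{lem:NPhard} gives NP-hardness and Lemma~\ref{lem:inNP} gives membership in NP. Together these are exactly NP-completeness, for pure sregexes and, by extension, for full sregexes. The plan is to check that each lemma really delivers what it claims, paying attention to the details those sketches leave implicit.

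For hardness, I would revisit the reduction in Lemma~\ref{lem:NPhard} and fix the indexing of variables. The formula has $n$ clauses over some $m$ variables, and $m$ need not equal $n$, but this does not affect the argument. The reduction needs both directions.
\begin{itemize}
\item Given a satisfying assignment, set $\pconc{\altern_j}_{(\di)}=\alternleft$ exactly when $x_j$ is true.
\item For each clause $i$, choose through its two uniquely labeled alternations one literal $\alpha_{i,k}$ that is true under the assignment. That literal produces $\mathtt{a}$, so the clause block yields $\mathtt{ab}$ and the whole expression yields $w=(\mathtt{ab})^n$.
\item Conversely, a concretization producing $w$ must produce exactly one $\mathtt{a}$ before each $\mathtt{b}$. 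That $\mathtt{a}$ comes from a chosen $\alpha_{i,k}$ whose shared label $j$ is evaluated consistently in the single world $\di$, since no star occurs. So the induced assignment satisfies every clause.
\end{itemize}
The reduction is clearly polynomial, since $|s|$ and $|w|$ are linear in the formula size.

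For membership, the NP certificate is a partial concretization $\C'$ of size $O(|s|\cdot|w|)$ by Theorem~\ref{thm:pconcPoly}, or $O(|s|^2\cdot|w|)$ by Theorem~\ref{thm:pconcPolyFull} for full sregexes. The step I expect to be the main obstacle is making sure the certificate can be \emph{verified} in polynomial time, not merely that it is small. Bounding $\size{\Id'}$ bounds the number of worlds, but I also need to bound the total encoding length. First, each defined $\pconc{^{\ster{l}}}_{(i)}$ is a tuple whose lengths sum to $|\Id'|$. Second, only labels occurring in $s$ need definitions, so the table has at most $|s|\cdot|\Id'|$ entries.

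Verification then evaluates $\pconc{s}_{\di}$ recursively by Definition~\ref{def:conc}, or by Definition~\ref{def:concgroup} in the full case, where uptransitions are resolved through the stored parent of each world. The number of evaluation calls is at most about $|s|\cdot|\Id'|$. This holds because, after the injective and acyclic normalization, each pair consisting of a subexpression position and a world is visited at most once in the pure case, and at most $|s|$ times in the full case. Finally, the verifier compares the output with $w$ and aborts as soon as the output exceeds length $|w|$.

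If the concretization is not acyclic or injective, the guessed table could cause nonterminating evaluation. To rule this out, the verifier additionally checks that the graph of $\C'$ is acyclic, or simply caps the recursion depth. With both lemmas established, NP-completeness follows.
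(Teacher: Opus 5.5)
Your route is the paper's: the theorem is Lemma~\ref{lem:NPhard} plus Lemma~\ref{lem:inNP}. Your more careful version of the 3SAT reduction (separate variable count, both directions, everything evaluated in the single world $\di$) is correct.

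The gap is in the part you added yourself, polynomial-time verification for full sregexes. You claim that after normalization each pair of subexpression position and world is visited at most $|s|$ times. This is false, even for a trimmed, acyclic, injective certificate of the kind Theorem~\ref{thm:pconcPolyFull} produces. Let $t_0=\epsilon$, $t_{k+1}=\bigl((t_k)_\uparrow\bigr)^{\ster{1}}$ and $s_d=t_d\,\mathtt{a}^{\ster{1}}$, with worlds $\di,j_1,j_2$ and $\pconc{^{\ster{1}}}_{(\di)}=(j_1,j_2)$.
\begin{itemize}
\item Then $\pconc{s_d}=\mathtt{aa}$, and neither iteration can be removed in step three, since each contributes an $\mathtt{a}$ via $\mathtt{a}^{\ster{1}}$.
\item Yet $\pconc{t_{k+1}}_{\di}=\pconc{(t_k)_\uparrow}_{j_1}\pconc{(t_k)_\uparrow}_{j_2}=\pconc{t_k}_{\di}\pconc{t_k}_{\di}$.
\item So naive recursive evaluation visits $(\epsilon,\di)$ $2^d$ times, while $|s_d|=O(d)$ and $|w|=2$.
\item All those calls return $\epsilon$, so aborting when the output exceeds $|w|$ does not bound the running time.
\end{itemize}

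Your safeguard also targets the wrong danger. Evaluation cannot fail to terminate: every clause of Definitions~\ref{def:conc} and~\ref{def:concgroup} recurses on a proper subexpression, so the recursion depth is at most $|s|$ even for cyclic world graphs. An acyclicity check or depth cap therefore buys nothing. The real danger is branching. Even in the pure case, for $(\cdots(\epsilon^{\ster{1}})^{\ster{1}}\cdots)^{\ster{1}}$ with $d$ stars, a guessed certificate with $\pconc{^{\ster{1}}}_{(i_k)}=(i_{k+1},i_{k+1})$ is acyclic and still forces $2^d$ calls.

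The fix is memoization. The value $\pconc{r}_j$ depends only on $r$ and $j$, so the verifier can fill a table over the at most $|s|\cdot|\Id'|$ pairs, in order of increasing subexpression size. Each entry is either a word of length at most $|w|$ or a single marker for undefined or too long. The verifier evaluates only the chosen side of each alternation, and reads parents for uptransitions off the tuples. Each entry costs time polynomial in $|w|$ and the certificate's tuple lengths. This gives polynomial-time verification for both pure and full sregexes, with no acyclicity or injectivity check needed.
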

    \begin{proof}
    The problem is NP-hard by Lemma \ref{lem:NPhard} and in NP by Lemma \ref{lem:inNP}.
    \end{proof}

\section{Closure Properties}\label{sec:close}

In this section, we show closure under union, concatenation and star for pure/full synchronized languages.
We also show that neither is closed under complement and that pure synchronized languages are not closed under intersection.
We see no point in finding a closure property for full synchronized languages or a more powerful model, because the word problem for the intersection is undecidable by Theorem~\ref{thm:undec}.

\begin{proposition}
    Pure and full synchronized languages are closed under union $L_1 \cup L_2$.
\end{proposition}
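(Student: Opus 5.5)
Given $L_1 = \Lang(s_1)$ and $L_2 = \Lang(s_2)$, I will show $L_1\cup L_2 = \Lang(s)$ with $s = s_1' \altern_l s_2'$. Here $s_1'$ and $s_2'$ are copies of $s_1, s_2$ whose labels have been renamed to be pairwise disjoint, and $l$ is a fresh label. Since $\Lb$ is infinite, the renaming is always possible. I first check that renaming labels injectively within a single expression does not change its language. A concretization for the renamed expression is obtained by permuting the functions $\conc{\altern_l}$ and $\conc{^\ster{l}}$ along the renaming, and the semantics in Definitions~\ref{def:conc} and~\ref{def:concgroup} only depends on which occurrences share a label. Uptransitions carry no label, so this works verbatim for full sregexes.

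For the inclusion $\Lang(s)\subseteq L_1\cup L_2$, I take any concretization $\C$ with $w=\conc{s}$. If $\conc{\altern_l}_{(\di)}=\alternleft$, then $w=\conc{s_1'}_{\di}$, and the same $\C$ witnesses $w\in\Lang(s_1')=L_1$. Here $\C$ is already a concretization for $s_1'$, since concretizations interpret all labels. For full sregexes the same $\C$ is still full and injective, as required. The case $\alternright$ is symmetric.

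For the inclusion $L_1\cup L_2\subseteq\Lang(s)$, I take $w\in L_1$, witnessed by some $\C=(\Id,\di,\concr)$ with $w=\conc{s_1'}$. I modify only $\conc{\altern_l}_{(\di)}$ to be $\alternleft$. Because $l$ does not occur in $s_1'$, this changes nothing in the evaluation of $s_1'$. The graph of Definition~\ref{def:inj} depends only on the star functions, so fullness and injectivity are preserved. Hence $\conc{s}=\conc{s_1'}_{\di}=w$. The case of $L_2$ is symmetric with $\alternright$.

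The only step needing care is the independence argument: labels shared between $s_1$ and $s_2$ would create unwanted synchronization, which the renaming avoids. For full sregexes I must also confirm that the fresh alternation at the top introduces no new star scope, so uptransitions in $s_i'$ resolve to the same worlds as before. This holds because $\altern_l$ just passes the world $i$ through. Undefinedness caused by $\emptyset$ causes no trouble, since the chosen branch is evaluated exactly as in $s_i'$.
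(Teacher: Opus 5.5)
Your proof is correct and is essentially the paper's argument: form $s_1 \altern_l s_2$ with a fresh label $l$, and observe that the choice $\conc{\altern_l}_{(\di)}$ selects exactly one branch, evaluated in the unchanged world $\di$. Your preliminary renaming of $s_1$ and $s_2$ apart is harmless but unnecessary, and your stated reason for it does not hold. Since only the chosen branch ever contributes to the word, labels shared between $s_1$ and $s_2$ cannot synchronize anything; that precaution is needed for concatenation, not union, and the paper's proof rightly omits it.
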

\begin{proof}
    Let $s_1,s_2$ be sregexes with $\Lang(s_1) = L_1$ and $\Lang(s_2) = L_2$.
    We can use $\Lang(s_1 \altern_l s_2) = L_1 \cup L_2$ for a fresh label $l$.
\end{proof}

\begin{proposition}
    Pure and full synchronized languages are closed under concatenation $L_1 L_2$.
\end{proposition}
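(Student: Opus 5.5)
Given sregexes $s_1,s_2$ with $\Lang(s_1)=L_1$ and $\Lang(s_2)=L_2$, the natural candidate is the concatenation $s_1s_2$. The only subtlety is that $s_1$ and $s_2$ may share labels, in which case $s_1s_2$ would synchronize operators that should be independent. So the first step is to rename labels: since $\Lb$ is infinite and $s_1,s_2$ are finite, we replace every label in $s_2$ by a label not occurring in $s_1$, using an injective renaming $\rho$. A renamed expression $s_2'$ satisfies $\Lang(s_2')=\Lang(s_2)$, because a concretization for $s_2$ becomes one for $s_2'$ by setting $\conc{\altern_{\rho(l)}}' = \conc{\altern_l}$ and $\conc{^{\ster{\rho(l)}}}' = \conc{^{\ster{l}}}$, and conversely. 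This also holds for full sregexes and for full, injective concretizations, since the graph of the concretization is unchanged. We then claim $\Lang(s_1s_2')=L_1L_2$.

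For $\subseteq$: any concretization $\C$ gives $\conc{s_1s_2'}=\conc{s_1}\conc{s_2'}$, and each factor lies in the respective language by definition. In the full case $\C$ is full and injective, which is also the setting in which $\conc{s_1}$ and $\conc{s_2'}$ are interpreted, so nothing changes.

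For $\supseteq$, which is the main step, let $w_1=\conc{s_1}^1$ and $w_2=\conc{s_2'}^2$ with concretizations $\C_1,\C_2$. We must merge them into a single concretization with a common initial world. The approach mirrors the concatenation case of Theorem~\ref{prop:adequacy}. Because the labels of $s_1$ and $s_2'$ are disjoint, we can reuse the world set and initial world of $\C_1$: on labels occurring in $s_2'$, the functions must agree with $\C_2$ along the worlds reachable in the evaluation of $s_2'$. Concretely, I would take $\Id=\Id_1\oplus\Id_2$ with initial world $\hat\imath_1$ and define:
\begin{itemize}
\item for labels of $s_1$ and worlds in $\Id_1$, the functions of $\C_1$;
\item for labels of $s_2'$ and worlds in $\Id_2$, the functions of $\C_2$;
\item at $\hat\imath_1$ for labels of $s_2'$, the value of $\C_2$ at $\hat\imath_2$;
\item arbitrary values everywhere else.
\end{itemize}
Evaluating $s_1$ then only ever consults labels of $s_1$ in $\Id_1$-worlds, which gives $w_1$. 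Evaluating $s_2'$ starting in $\hat\imath_1$ behaves exactly like $\C_2$ in $\hat\imath_2$ after the first step, which gives $w_2$.

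The obstacle is the full case. There we additionally need the merged concretization to be full and injective, and uptransitions in $s_2'$ evaluated at $\hat\imath_1$ must find the same predecessor chain as $\hat\imath_2$ had in $\C_2$. To handle this, I would first make $\C_1$ and $\C_2$ acyclic, injective and full as in Lemma~\ref{lem:injfull}. I would then identify the predecessor chains of $\hat\imath_1$ and $\hat\imath_2$ pairwise, merging each pair of corresponding ancestor worlds into one world that carries $\C_1$'s values on $s_1$-labels and $\C_2$'s values on $s_2'$-labels. Disjointness of labels makes these assignments compatible. The remaining check, that injectivity and fullness survive, can be repaired by duplicating worlds as in the second and third steps of Lemma~\ref{lem:injfull}.
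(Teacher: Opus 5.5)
Your route is the paper's: rename the labels of $s_2$ apart and take $s_1s_2'$. The paper merely asserts $\Lang(s_1s_2)=L_1L_2$. Your renaming argument, the $\subseteq$ direction and the coproduct merge for pure sregexes correctly fill that in.

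In the full case, however, the step ``make $\C_1,\C_2$ acyclic, injective and full as in Lemma~\ref{lem:injfull}'' fails. That lemma is stated only for pure sregexes, and its construction destroys the ancestors of the initial world. After Lemma~\ref{lem:acyc}, $\langle\hat\imath,0\rangle$ has no predecessor, and step three then attaches a fresh chain on which every alternation chooses left and every star has length one. For $s_1=(\mathtt{a}\altern_{1}\mathtt{b})_\uparrow$ and $w_1=\mathtt{b}$, the transformed $\C_1$ therefore yields $\mathtt{a}$.

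The fix is to count depth in $\mathbb{Z}$ instead of $\N$. On $\Id\times\mathbb{Z}$, set $\conc{^{\ster{l}}}'_{(\langle i,k\rangle)}=(\langle j_1,k+1\rangle,\dotsc,\langle j_n,k+1\rangle)$. Then each $\langle j,k\rangle$ has the unique predecessor $\langle\mathit{pred}(j),k-1\rangle$, where $\mathit{pred}(j)$ is the unique predecessor of $j$ in the original full, injective concretization. So the result is acyclic, injective and full, and induction over Definition~\ref{def:concgroup} gives $\conc{r}'_{\langle i,k\rangle}=\conc{r}_i$ for all full sregexes $r$. Now the ancestor chains $\hat\imath_t=a^t_0,a^t_1,a^t_2,\dotsc$ ($t=1,2$) consist of pairwise distinct worlds. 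Replacing each pair $a^1_k,a^2_k$ by one new world $m_k$ is then consistent, which is what your pairwise identification needs.

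Your closing repair is also too coarse. Step three of Lemma~\ref{lem:injfull} is harmless only for off-chain worlds that lost their parent, since these are never visited when evaluating $s_1s_2'$. On the chain itself you must substitute $m_{k-1}$ for $a^t_{k-1}$ in the star sequences of $m_k$. You must also make sure the edge $m_k\to m_{k-1}$ survives. The corresponding edge $a^1_k\to a^1_{k-1}$ of $\C_1$ may stem only from a label of $s_2'$, whose value at $m_k$ you now take from $\C_2$ (and symmetrically for $\C_2$). So you may have to add $m_{k-1}$ under a spare label not occurring in $s_1s_2'$. Giving an orphaned $m_{k-1}$ a fresh predecessor chain instead would redirect the uptransitions leaving $m_{k-1}$ and change $w_1$ or $w_2$.
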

\begin{proof}
    Let $s_1,s_2$ be sregexes with $\Lang(s_1) = L_1$ and $\Lang(s_2) = L_2$, such that both sregexes have a distinct set of labels.
    We can use $\Lang(s_1 s_2) = L_1 L_2$ for a fresh label $l$.
\end{proof}

For the Kleene star, the situation is not so clear anymore, because $\Lang(s)^* = \Lang(s^\ster{l})$ no longer holds.
For a full sregex $(a^\ster{1})_\uparrow$ we get $\Lang(a(b^\ster{1})_\uparrow) ^* = \{ab^m\mid m \in \N\}^* = \{ab^{m_1} \cdots ab^{m_n} \mid n\in\N, m_1,\dotsc,m_n \in \N  \}$ but 
$\Lang((a(b^\ster{1})_\uparrow)^\ster{2}) = \{ (ab^m)^n \mid m,n\in\N\}$.
We still get the closure property by removing the uptransition $_\uparrow$.
This approach does work in general, see the proof of the following proposition.

\begin{proposition}
    Pure and full synchronized languages are closed under star $L^*$.
\end{proposition}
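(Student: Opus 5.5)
Given a sregex $s$ with $\Lang(s)=L$, I will use $s^{\ster{l}}$ for a fresh label $l$ not occurring in $s$, after a syntactic transformation $s \mapsto \hat s$. The transformation is needed for full sregexes: $\hat s$ must behave, when evaluated in an iteration world $j$ of the new star, exactly as $s$ behaves in an initial world. A top-level uptransition in $s$ must not escape to the world of the new star.

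\emph{Pure case.} I take $\hat s = s$. The inclusion $\Lang(s^{\ster{l}})\subseteq L^*$ follows the star case of Theorem~\ref{prop:adequacy}: every iteration $\conc{s}_{j_k}$ is the concretization of $s$ in the concretization $(\Id,j_k,\concr)$, so it lies in $L$. For $L^*\subseteq\Lang(s^{\ster{l}})$, given $w_1\cdots w_n$ with $w_k=\conc{s}^k$ under concretizations $\C_k$, I form the coproduct $\Id_1\oplus\dots\oplus\Id_n\oplus\{\hat\imath\}$ and set $\conc{^{\ster{l}}}_{(\hat\imath)}=(\hat\imath_1,\dotsc,\hat\imath_n)$. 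Since $l$ does not occur in $s$, this definition does not interfere with evaluating $s$ inside the worlds of $\Id_k$. This is exactly the back-direction computation already done in Theorem~\ref{prop:adequacy}, so the freshness of $l$ is the only new ingredient.

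\emph{Full case.} Uptransitions in $s$ that are not nested inside any star of $s$ refer to the world above the initial world. By fullness this is one of the auxiliary worlds added in Lemma~\ref{lem:injfull}, namely $i_{-1},i_{-2},\dotsc$, on which every $\altern_l$ is $\alternleft$ and every star has one iteration. I define $\hat s$ by replacing each subexpression $(r)_\uparrow$ that climbs above the root by $r'$, where $r'$ is $r$ with every operator relabeled so that its evaluation is forced. One option is to replace $\altern_{l'}$ by its left argument and $x^{\ster{l'}}$ by a single copy of $x$, mirroring the behaviour of the auxiliary worlds. An uptransition of depth $d$ that ends up above the root is handled by counting, for each occurrence of $\uparrow$, how many stars enclose it. The difficulty is that the semantics of $s$ over all full injective concretizations may \emph{not} fix the ancestor worlds: the ancestor of $\hat\imath$ could be an arbitrary world. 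The cleaner alternative is therefore to prove a normalization lemma first. For every full sregex $s$ there is an $s'$ with $\Lang(s')=\Lang(s)$ in which no uptransition escapes the root. To build it, an escaping $(r)_\uparrow$ in the root context is replaced by $r$ with its labels renamed to fresh labels, shared consistently among all escaping occurrences. This captures that they are all evaluated in the same unknown ancestor world $j$, which is independent of the labels used at $\hat\imath$. Iterated escapes are handled the same way, with one fresh copy of the label set per ancestor level.

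\emph{Conclusion.} With this normalization, $\hat s=s'$, and both inclusions of $\Lang(\hat s^{\ster{l}})=L^*$ go through as in the pure case. Each iteration world $j_k$ starts a fresh full, injective sub-concretization; the coproduct gluing still yields a full, injective concretization after adding the chain of predecessor worlds for $\hat\imath$. Uptransitions inside $\hat s$ never reach the new star's world, since the normalization removed exactly those. I expect the main obstacle to be the normalization lemma, specifically checking that renaming escaping labels to fresh labels preserves the language exactly. Escaping uptransitions from different iterations of $\hat s$ must be independent, and this is guaranteed because the fresh labels are now evaluated in the distinct worlds $j_k$.
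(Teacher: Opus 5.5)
Your argument is correct and follows the paper's core idea. For pure sregexes you use a fresh-label star. For full sregexes you first delete every uptransition that climbs above the initial world and give the labels beneath it fresh names.

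\textbf{Organization.} You isolate the rewriting as a language-preserving normalization $\Lang(s')=\Lang(s)$ and then reuse the pure-case gluing. The paper instead goes straight to $\Lang(\tilde s^{\ster{l}})=\Lang(s)^*$.

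\textbf{Sharing of fresh labels.} You require the fresh labels to be shared by all occurrences that escape to the same ancestor level, with one copy per level. The paper leaves this implicit, but it is needed: in $(\mathtt{a}^{\ster{1}})_\uparrow\mathtt{b}(\mathtt{a}^{\ster{1}})_\uparrow$ both stars are evaluated in the same parent world.

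\textbf{No extra uptransitions.} The paper's sketch rewrites the inner stars $t^{\ster{l}}$ as $t_\uparrow^{\ster{l'}}$; you do not. Read literally, that wrapping evaluates every iteration of $t$ in one world. It would lose $\mathtt{cab}$ from the starred language of $\mathtt{c}((\mathtt{a}\altern_1\mathtt{b})^{\ster{2}})_\uparrow$. Your plain relabeling therefore seems to be the better transformation.

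\textbf{What each route buys.} Yours gives a reusable lemma and a clean reduction to the pure case, at the cost of actually proving that lemma. The paper's sketch is shorter but leaves the correctness of its transformation unargued.

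\textbf{The normalization lemma.} Both directions are routine but should be written out. For $\Lang(s')\subseteq\Lang(s)$, attach the reconstructed ancestors of $\hat\imath$ through a label not occurring in $s$, so their operator values are unconstrained. For $\Lang(s)\subseteq\Lang(s')$, transfer each ancestor's values to the corresponding fresh labels at the root. This requires copying subtrees, which may contain $\hat\imath$ itself when the concretization is cyclic.

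\textbf{The gluing step.} As stated it is not quite right. Since $\C_k$ is full, $\hat\imath_k$ already has a parent there, so attaching it to the new root violates injectivity. Simply detaching it can change the word if $\C_k$ has a cycle through $\hat\imath_k$. First unfold $\C_k$ into a tree rooted at $\hat\imath_k$. This is harmless precisely because $s'$ never climbs above its start world.
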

\begin{proof}
    Let $s$ be an sregex with $\Lang(s) = L$.
    For pure sregex, we just use $\Lang(s^\ster{l}) = \Lang(s)^*$ for some fresh label $l$.

    For full sregex, we first transform $s$ into $\tilde{s}$ and also get $\Lang(s^\ster{l}) = \Lang(s)^*$ for some fresh label $l$. For each uptransition $\cdots (r)_\uparrow \cdots$, we count the outer Kleene stars $^\ster{l}$ and outer uptransitions. If the count of the Kleene stars is less or equal, then the current uptransition would go further up from the initial world $\di$ for each concretization $\C$. Hence, we remove that uptransition. To ensure that $\altern_l$ are still evaluated differently in $r$ than outside, we rename these as $\altern_{l'}$ with fresh labels $l'$. For $t^\ster{l}$, we also add fresh labels and add a new uptransition around $t$, resulting in $t_\uparrow^\ster{l'}$. We now handle the uptransitions in $t_\uparrow^\ster{l'}$ recursively.
\end{proof}

Now, we show the negative properties on complement and intersection.

\begin{proposition}
    Pure synchronized languages are not closed under intersections.
\begin{proof}
    We give pure synchronized languages $L_1$ and $L_2$ such that $L_1 \cap L_2 = \{ (a^n b)^{2m+1} \mid n,m \}$ is not pure synchronized by Theorem~\ref{thm:pumping}:
    \begin{align*}
        L_1 &= \Lang( (a^\ster{1} b a^\ster{1} b)^\ster{2} a^\ster{3} b ) &
        L_2 &= \Lang( a^\ster{1} b (a^\ster{2} b a^\ster{2} b)^\ster{3}  ) \qedhere
    \end{align*}
\end{proof}
\end{proposition}

\begin{corollary}
    Pure synchronized languages are not closed under complement, because we can express an intersection with union and complement as $L_1 \cap L_2 = (L_1^c \cup L_2^c)^c$.
\end{corollary}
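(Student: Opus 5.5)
The statement is the Corollary immediately above: pure synchronized languages are not closed under complement. I argue by contradiction, combining closure under union (the earlier Proposition) with non-closure under intersection (the Proposition just before).

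Suppose pure synchronized languages were closed under complement $L \mapsto L^c = \Sigma^*\setminus L$. Take the two pure synchronized languages from the preceding Proposition:
\begin{align*}
L_1 &= \Lang( (a^\ster{1} b a^\ster{1} b)^\ster{2} a^\ster{3} b ), &
L_2 &= \Lang( a^\ster{1} b (a^\ster{2} b a^\ster{2} b)^\ster{3} ).
\end{align*}
Apply the following chain of steps.
\begin{enumerate}
\item By the assumed closure under complement, $L_1^c$ and $L_2^c$ are pure synchronized. Let $s_1$ and $s_2$ be pure sregexes generating them.
\item By closure under union, $L_1^c \cup L_2^c$ is pure synchronized. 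Concretely, it is $\Lang(s_1 \altern_l s_2)$ for a fresh label $l$.
\item Complementing once more, $(L_1^c \cup L_2^c)^c$ is pure synchronized.
\end{enumerate}
By De Morgan's law, taken over the common universe $\Sigma^*$, this last language is exactly $L_1\cap L_2$.

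The preceding Proposition, via the Pumping Lemma (Theorem~\ref{thm:pumping}), shows that $L_1\cap L_2$ is not pure synchronized. This is a contradiction, so pure synchronized languages are not closed under complement.

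There is essentially no obstacle in this argument; all the difficulty sits in the intersection result being invoked. Two small points need checking. First, complements must be taken relative to $\Sigma^*$ for a fixed alphabet $\Sigma$, so that De Morgan applies. Second, the union construction must not interfere with labels: renaming labels apart before forming $s_1 \altern_l s_2$ keeps $\Lang(s_1\altern_l s_2) = \Lang(s_1)\cup\Lang(s_2)$, since concretizations for $s_1$ and $s_2$ then act independently.

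If one did not want to rely on the intersection Proposition, whose proof only gestures at the pumping argument, one could instead exhibit a direct witness $L$ with $L^c$ not pure synchronized. That would require a separate pumping analysis, so I would rely on the Proposition.
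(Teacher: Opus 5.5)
Your proof is correct and follows the paper's argument: if pure synchronized languages were closed under complement, then closure under union (via $s_1 \altern_l s_2$ with a fresh label $l$) and De Morgan over $\Sigma^*$ would make $L_1 \cap L_2$ pure synchronized, contradicting the preceding proposition on intersection. One small point: renaming the labels of $s_1$ and $s_2$ apart is harmless but not needed for the union step, since any concretization evaluates only one branch of $s_1 \altern_l s_2$; only the freshness of $l$ matters.
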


The following theorem was adapted from regular expressions with backreferences, see \cite{OnExtendedRegex}.

\begin{theorem}\label{thm:undec}
    Given two pure/full sregexes $s$ and $r$, the \emph{in\-ter\-sec\-tion-non-emptiness} problem $\Lang(s) \cap \Lang(r) \stackrel{?}{=} \emptyset$ is undecidable.
\end{theorem}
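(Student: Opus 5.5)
We reduce from the Post Correspondence Problem (PCP), which is undecidable, following the approach of \cite{OnExtendedRegex} for backreferences. It suffices to treat pure sregexes, since every pure sregex is also a full sregex with the same concretizations (no uptransitions occur). Given a PCP instance $(u_1,v_1),\dotsc,(u_m,v_m)$ over $\Sigma$, we add fresh separator letters $\#$ and $\$$. The goal is two pure sregexes $s$ and $r$ with $\Lang(s)\cap\Lang(r)\neq\emptyset$ if and only if the instance has a solution.

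\textbf{Step 1 (choose the witness).} A witness word has the form
\[
w \;=\; x\,\$\,x\,\#\,c_1\cdots c_n ,
\]
where $x\in\Sigma^*$ is the common string and $c_1\cdots c_n\in\{1,\dotsc,m\}^+$ encodes the index sequence with fresh letters $c_t$.
\begin{itemize}
\item The sregex $s$ enforces that the first block is $u_{c_1}\cdots u_{c_n}$ and the second block is $v_{c_1}\cdots v_{c_n}$ for the \emph{same} index sequence.
\item The sregex $r$ enforces that the two blocks are equal, via the reduplication pattern ${\wildcard_1}^{\ster{2}}\$\,{\wildcard_1}^{\ster{2}}\,\#\,\Gamma^{*}$ from the introduction, where $\Gamma$ is the alphabet of index letters (with fresh labels).
\end{itemize}
The key point is that two occurrences of $^{\ster{2}}$ in the same world receive the same iteration sequence $(j_1,\dotsc,j_n)$. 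Inside iteration $j_k$, both copies of $\wildcard_1$ make the same choice. Hence the two blocks coincide letter by letter.

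\textbf{Step 2 (build $s$).} We use the same synchronization trick on the index letters:
\[
s=\bigl(u_1\altern_1 u_2\altern_1\cdots\altern_1 u_m\bigr)^{\ster{2}}\,\$\,\bigl(v_1\altern_1\cdots\altern_1 v_m\bigr)^{\ster{2}}\,\#\,\bigl(c_1\altern_1\cdots\altern_1 c_m\bigr)^{\ster{2}} ,
\]
again with all alternations labelled $1$ and all stars labelled $2$. By Definition~\ref{def:conc}, in iteration world $j_k$ every occurrence of $\altern_1$ evaluates to the same choice function $\conc{\altern_1}_{(j_k)}$, so the chosen tuple index agrees across the three factors. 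One must check that a left-associative chain of alternations sharing one label really selects a well-defined single index. This holds if we read the chain as a decision list: the choice $\alternright$ at each level is the same in every copy because the label is the same. We would include the $c_t$ factor to make $n\ge1$ visible; we then require $n\ge 1$ by writing one mandatory copy before each star, with the same label-$1$ alternation in a common world.

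\textbf{Step 3 (correctness).}
\begin{itemize}
\item If $(i_1,\dotsc,i_n)$ solves the instance, we build explicit concretizations. Take a world $\hat\imath$ with $\conc{^{\ster 2}}_{(\hat\imath)}=(j_1,\dotsc,j_n)$ and set $\conc{\altern_1}_{(j_k)}$ to select index $i_k$. This puts $w$ into $\Lang(s)$, and similarly into $\Lang(r)$.
\item Conversely, any common word yields an index sequence from $s$ and the equality $x=x$ from $r$, hence a solution.
\end{itemize}

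The main obstacle is the nonemptiness requirement $n\ge1$ together with the technicality that a single shared label on an $m$-fold alternation chain behaves as one $m$-way choice. I expect to handle both by the explicit decision-list argument and one unrolled copy.
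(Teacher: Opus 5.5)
Your route is different from the paper's. The paper only sketches the result: it cites the Type-0 grammar reduction of \cite[Theorem 2]{OnExtendedRegex} and transports those backreference expressions to pure sregexes via Theorem~\ref{thm:REGEX}. A direct PCP reduction would be a good self-contained alternative, but Step~2 fails as written.

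Within one world $j_k$, the value $\conc{\altern_1}_{(j_k)}$ is a single choice. It applies to \emph{every} occurrence of $\altern_1$ evaluated in $j_k$: not only to the matching occurrences in your three factors, but also to all levels of the same chain. By left-associativity, $u_1\altern_1 u_2\altern_1\cdots\altern_1 u_m = (\cdots(u_1\altern_1 u_2)\altern_1\cdots)\altern_1 u_m$. So $\alternright$ yields $u_m$, while $\alternleft$ propagates to every inner level and yields $u_1$. The tiles with indices $2,\dotsc,m-1$ are unreachable. Your \emph{decision list} reading needs independent choices at the different levels, which is exactly what a single shared label rules out.

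Consequently the forward direction of Step~3 breaks: you cannot set $\conc{\altern_1}_{(j_k)}$ to select $i_k$ when $1<i_k<m$. What you have actually reduced from is PCP restricted to the two tiles $(u_1,v_1),(u_m,v_m)$. That problem is decidable, so the construction proves nothing.

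The missing idea is the one the paper already uses for the wildcard $\wildcard_l$. Give the $m-1$ levels of each chain pairwise distinct labels, and synchronize position-wise across the three factors. For example, use $u_1\altern_{a_1}u_2\altern_{a_2}\cdots\altern_{a_{m-1}}u_m$, and the same labels $a_1,\dotsc,a_{m-1}$ in the $v$-chain and the $c$-chain. In world $j_k$ the selected index is then the largest $t$ with $\conc{\altern_{a_{t-1}}}_{(j_k)}=\alternright$, or $t=1$ if there is none. This index is the same in all three chains, and every index is attainable.

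With this change, plus your unrolled copies (or simply requiring $\Gamma\Gamma^{*}$ in $r$) to force $n\ge 1$, Step~3 goes through. The argument then no longer depends on the informal translation of Theorem~\ref{thm:REGEX}. For the full-sregex case, note that Definition~\ref{def:concgroup} only admits full, injective concretizations. You should therefore cite Lemma~\ref{lem:injfull} rather than say the concretizations are the same.
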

\begin{proof}[Proof Sketch]
    The proof can be found in \cite[Theorem 2]{OnExtendedRegex}. The approach is the reduction to the word problem for Type-0 grammar derivations, which is undecidable.
    To show that the proof also holds for pure sregexes, we observe that all extended regular expressions used in the proof can be translated into pure sregexes as we will see in Theorem~\ref{thm:REGEX}.
\end{proof}

We conclude this section with a positive result:

\begin{theorem}\label{lemma:inPSPACE}
	Given one pure sregex $s$ and one regular expression $\alpha$, the in\-ter\-sec\-tion-non-emptiness problem $\Lang(s)\cap\Lang(\alpha) \stackrel{?}{=} \emptyset$ is in PSPACE.
\end{theorem}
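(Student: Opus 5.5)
Convert $\alpha$ into an NFA $A=(Q,\Sigma,\delta,q_0,F)$ with $|Q|=O(|\alpha|)$. The question becomes whether some concretization $\C$ makes $\conc{s}$ label an accepting run of $A$. We therefore abstract every word to its \emph{transition relation} $R_w\subseteq Q\times Q$, where $(p,q)\in R_w$ iff $A$ can go from $p$ to $q$ reading $w$. Such relations form a finite monoid of size at most $2^{|Q|^2}$. Each relation has a polynomial-size description and composes in polynomial time. We then decide the problem by evaluating $s$ over this monoid instead of over words.

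The difficulty is synchronization. Within one world $i$, all occurrences of $\altern_l$ share a choice, and all occurrences of $^{\ster{l}}$ share the sequence $(j_1,\dotsc,j_n)$. The sub-occurrences evaluated in world $i$ are exactly the occurrences in $s$ that are not below a star relative to the occurrence that introduced $i$. By Lemma~\ref{lem:injfull} we may assume $\C$ acyclic and injective, so each world is the child of a single star iteration. Structurally, a world $j$ created by $^{\ster{l}}$ in world $i$ evaluates simultaneously the bodies $r_1,\dotsc,r_m$ of \emph{all} occurrences $r_k^{\ster{l}}$ that are evaluated in $i$, and the same $j$ feeds every one of them. The abstraction must therefore be a \emph{tuple} $(R_1,\dotsc,R_m)$ of relations, one per occurrence body. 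The set of achievable tuples is $T(r_1,\dotsc,r_m)=\{(R_{\conc{r_1}_j},\dotsc,R_{\conc{r_m}_j})\mid \C, j\}$. The tuple of star relations is then any finite componentwise product of elements of $T$. That set is the submonoid generated by $T$ in the product monoid, and it can be computed as a reachability closure.

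The algorithm recurses as follows. To decide whether a tuple $(R_1,\dotsc,R_m)$ is achievable for the occurrence set $\{r_1,\dotsc,r_m\}$ in one world, we first nondeterministically guess all choices $\conc{\altern_l}$ for the alternation labels at the top level of these expressions. Each star label $l$ at top level, with its occurrence set $O_l$, now needs one tuple from the monoid generated by $T(\text{bodies of }O_l)$. We guess that tuple and verify membership by iterating. We keep a current product tuple, starting from identities. At each step we either stop or guess a new element of $T(\text{bodies})$, verify it recursively, and multiply it in. The number of steps is bounded by the size of the product monoid, so a binary counter of polynomially many bits suffices. Once every top-level star tuple is fixed, the relation of each $r_k$ follows by composing along concatenations and applying the guessed choices. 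We then compare with the target. The top-level call is $m=1$ with $s$ itself, and it asks whether some achievable $R$ satisfies $(q_0,f)\in R$ for some $f\in F$.

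For the space analysis, the recursion depth is bounded by the star nesting depth, which is at most $|s|$. Each frame stores polynomially many tuples: at most $|s|$ components of $|Q|^2$ bits each, plus a counter. The total space is therefore polynomial. NPSPACE equals PSPACE by Savitch, and the nondeterministic guesses in the recursion can be determinized the same way or replaced by deterministic enumeration of candidate tuples. The remaining step is the correctness argument: the tuple of relations captures exactly what is needed, because relations of concatenations compose, and distinct child worlds are independent by the coproduct argument from Theorem~\ref{prop:adequacy}. The $\emptyset$ case is handled as an undefined or empty relation. The main obstacle is correctly identifying the set of occurrences that share a world, since these are precisely the same-label stars evaluated in the same world, as used in Theorem~\ref{thm:pumping}. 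A further obstacle is justifying that iterations can be chosen independently, so that the achievable set really is the generated monoid rather than something more constrained.
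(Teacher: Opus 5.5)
Your proposal is correct and follows essentially the same scheme as the paper's proof. Both arguments match all occurrences of a star label that are evaluated in one world simultaneously, guess the alternation choices once per world, cap the number of star iterations with a pigeonhole counter of polynomially many bits, and bound the recursion depth by the star nesting depth. The only real difference is the finite abstraction: the paper guesses NFA start/end states for each occurrence (so its star loop walks over state tuples and stops after $|Q|^k$ steps), whereas you carry the full transition relation of each occurrence (loop bound $2^{m|Q|^2}$), which uses more space but is still polynomial.
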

For the proof, see Appendix~\ref{sec:lemma:inPSPACE}.



\section{Comparison with Related Approaches}\label{sec:expressive}

In this section, we compare the expressiveness of sregexes with language models based on backreferences and grammars.

\subsection{Backreferences}\label{sec:backreferences}

\emph{Regexes with backreferences} are found in modern programming languages such as Perl, Python and Java.
The unix utilities \texttt{grep} and \texttt{ed} support backreferences since the seventh edition of Research Unix in 1979 \cite{v7man}---here is the excerpt of the manual page \texttt{ed(1)} for the syntax and intuitive meaning:
\begin{itemize}
\item[6.] \emph{A regular expression, $x$, of form 1-8, bracketed $\langle x\rangle$ matches what $x$ matches.}
\item[7.] \emph{A $\backslash$ followed by a digit $n$ matches a copy of the string that the bracketed regular expression beginning with the $n$th $\langle$ matched.}
\end{itemize}
For example,
the expression $\langle a^*\rangle b\backslash 1$ defines the language $\{a^n b a^n \mid n \ge 0\}$.
Formal syntax and semantics are provided by several papers, e.g.~\cite{HandbookFindingPatterns,formalPracticalRegEx,SRE,OnExtendedRegex,CharacterisingRegEx,detrewbr,RegExBack}.

Note that the intuitive meaning does not cover expressions like $(y \altern \langle x\rangle)\backslash 1$ or $\langle x\rangle^*\backslash 1$, where the referenced subexpression is inside an alternation or Kleene star.
It is not clear how to resolve $\backslash 1$ when the respective $\langle x\rangle$ was never, or multiple times, matched.
There are various solutions found in different implementations and formalisms \cite{RegExBack}.
In \cite{HandbookFindingPatterns} as well as in Perl, Python and Java, the empty language~$\emptyset$ is the default when the match was never assigned, and we take the \emph{last match} when there are multiple.
Other formalizations use the empty word~$\epsilon$ as default \cite{formalPracticalRegEx,detrewbr}, which is also a good choice if we want to exclude the empty language~$\emptyset$ from the syntax and semantics of sregexes.

The following theorem shows how to transform regexes with backreferences (with $\emptyset$ as default) into sregexes:

\begin{theorem}\label{thm:REGEX}
    Every \emph{regex with backreferences} $\alpha$ can be transformed into a \emph{full sregex}.
    If $\alpha$ has no backreferences inside Kleene stars pointing to expressions outside, the transformation results in a \emph{pure sregex}.
    \begin{proof}
        If a backreference outside a star points to a subexpression $\langle\beta\rangle$ inside, we first transform $\alpha$ to an equivalent regular expression with backreferences, by replacing $(\cdots \langle \beta\rangle\cdots)^*$ with $(\epsilon \altern (\cdots \langle\beta\rangle \cdots)^*(\cdots \langle\beta\rangle\cdots))$, and adjusting backreference numbers $\backslash m$ accordingly. For each star, backreferences can only point to the last iteration, so at most one replacement is done for each star.

        We add unique labels to each alternation $\altern$ and Kleene star~$^*$ to get a synchronized expression $\tilde{\alpha}$, which still contains backreferences.
        When a backreference inside a star points to a subexpression outside the star, we add one backtransition around the backreference for each star.
        For all backreference $\backslash m$ that refer to a subexpression $\beta$ in the scope of an alternation $(\cdots \altern_l \cdots \langle \beta\rangle\cdots)$ or $(\cdots \langle\beta\rangle \cdots \altern_l \cdots)$, we add an alternation $\altern_l$ with the same label and the default option $\emptyset$ also to the backreference $\backslash m$, resulting either in $(\emptyset \altern_l \backslash m)$ or $(\backslash m \altern_l \emptyset)$.
        In case of nested alternations, we get a chain of alternations like $(\emptyset \altern_{l_1} ((\emptyset \altern_{l_3} \cdots \backslash m \cdots) \altern_{l_2} \emptyset))$.

        Finally, we replace the backreferences with the (already labeled) expressions in the parentheses they refer to.
        We also replace the brackets $\langle$ and $\rangle$ with normal brackets $($ and $)$.
        Because of the initial transformation and the added uptransitions, former references and former targets will be evaluated within the same worlds by a concretization.
    \end{proof}
\end{theorem}

\begin{example}
    The formula $\langle a^*\rangle \langle b \backslash 1 c^*\rangle \backslash 2$ satisfies the prerequisites of Theorem~\ref{thm:REGEX} for a pure sregex, because the backreference targets are not inside the scope of a Kleene star.
    First, adding labels results in $\langle a^{\ster{1}}\rangle \langle b \backslash 1 c^{\ster{2}}\rangle \backslash 2$. Then we resolve $\backslash 1$ to get $a^{\ster{1}} \langle b a^{\ster{1}} c^{\ster{2}}\rangle \backslash 2$ and $\backslash 2$ to get the pure sregex $a^{\ster{1}} (b a^{\ster{1}} c^{\ster{2}})  (b a^{\ster{1}} c^{\ster{2}})$. You can observe that the $^{\ster{1}}$ is used three times and the $^{\ster{2}}$ twice. The sregex is longer but should be easier to read, because there are no backreferences.
\end{example}

\begin{example}
The regex $\langle aaa^*\rangle \backslash 1 \backslash 1^*$ from \cite[Lemma~2]{formalPracticalRegEx} generates the language $\{ a^{mn} \mid m,n 
\ge 2\} = \{ a^m \mid \text{$m > 1$ is not a prime number} \}$.
With unique labels, we get $\langle aaa^\ster{1}\rangle \backslash 1 \backslash 1^\ster{2}$.
Now, we add uptransitions and get $\langle aaa^\ster{1}\rangle \backslash 1 (\backslash 1)_\uparrow^\ster{2}$.
We resolve the backreferences and get the full sregex $(aaa^\ster{1})(aaa^\ster{1}) (aaa^\ster{1})_\uparrow^\ster{2}$.
\end{example}

Here is one example of a backreference pointing inside an alternation:

\begin{example}
The regex $(a \altern \langle b\rangle) c \backslash 1^*$ generates the same language as $ac \altern bcb^*$.
Using the translation provided by Theorem~\ref{thm:REGEX}, we get $(a \altern_1 \langle b\rangle)c\backslash 1^\ster{2}$, then $(a \altern_1 \langle b\rangle)c(\backslash 1)_\uparrow^\ster{2}$, then  $(a \altern_1 \langle b\rangle)c(\emptyset \altern_1 \backslash 1)_\uparrow^\ster{2}$, and finally $(a \altern_1 b)c(\emptyset \altern_1 b)_\uparrow^\ster{2}$.
\end{example}

The next example shows that there are languages that can be formulated as pure sregex but not as regular expressions with backreferences.

\begin{example}
    In Example 8 of \cite{formalPracticalRegEx}, it was shown that the language $\{a^n b^n \mid n \le 0\}$ is not recognized by a regular expression with backreferences. In contrast, the pure sregex $a^{\ster{1}} b^{\ster{1}}$ recognizes it.
\end{example}

\subsection{Grammars}

Moving aside from regular languages, there are context-free and context-sensitive languages.
The language generated by the pure sregex $a^{*_1} b^{*_1} c^{*_1}$ is the non-context-free language $\{ a^n b^n c^n\mid n\in\N\}$.

The context-free grammar
    $S \to c S \text{\reflectbox{$c$}} \mid SS \mid \epsilon$
 generates the language of parentheses, PAREN$_1$, see \cite[p. 198--200]{KozenAutomata}.
Here, we fix $c$ as the left and \reflectbox{$c$} as the matching right \enquote{parentheses} for PAREN$_1$ with $n=1$.

In our intuition, the non-terminal $S$ can \enquote{call} itself with the two recursive \enquote{calls} $SS$. However, with iterations, by using the Kleene star $^{*_l}$, we cannot model such \enquote{calls}.
Of course, this is no formal proof that PAREN$_1$ is not (pure) synchronized.
We cannot use our pumping lemma (Theorem~\ref{thm:pumping}), as the language satisfies the condition.

\section{Conclusion}\label{sec:conc}

Sregexes naturally extend regular expressions, and we have shown that they have a solid theoretical foundation by using \emph{concretizations}. Similar to Kripke semantics in modal logic, word matching is done in the respective world context, with a Kleene star, we transition to other worlds. Matching is NP-complete; however, practical regexes with backreferences as used by pattern matching tools are also NP-hard.
 With $a^{*_1}b^{*_1}c^{*_1}$, we can recognize a language that is not context-free.

To show that a language is not \emph{pure} synchronized, we presented a pumping lemma. For example, the languages $\{\mathtt{a}^{n^2} \mid n\in\N\}$ and $\{(\mathtt{ab}^m)^n\mid m,n\in\N\}$ are not pure synchronized.
We also show that pure synchronized languages are closed under concatenation, star and union, but not under intersection.

As for practical tools, we compared sregex with practical regex with backreferences as defined in \cite{HandbookFindingPatterns}. Practical tools allow even more expressive regex, however, often lacking a solid theoretical foundation \cite{RegExBack}. To catch up with some practical features, it is easy to define practical sregex with more practical operators like the wildcard $\wildcard$ or quantifiers $^{n..m}$.

Another extension of regular expressions are \emph{shuffle expressions} $\alpha \shuffle \beta$ that result in letter interleavings of subexpressions $\alpha$ and $\beta$. \emph{(Letter-)synchronized shuffle expressions} \cite{Mitrana,Moreira} provide a very different form of synchronization. Exploring the combination of letter synchronization with our operator synchronization could be an interesting direction for future work.\footnote{This observation by an anonymous reviewer is very much appreciated.}

In Section~\ref{sec:extended}, we introduced \emph{full} sregex with uptransitions. With these, we can equate operators also inside different star iteration contexts. From a practical perspective, this means that we cover the expressiveness of regular expressions with backreferences. Theorem~\ref{thm:REGEX} shows this formally.

Our personal motivation was to identify program states in the dynamic logic DPDL~\cite{PDL,DPDL} and DA logic~\cite{DependentAssertionLogic}. Synchronized regular expressions are a basis for the sequent calculus in \cite[Chapter 12]{mythesis}. There, the synchronization is used for (sub)expressions that should match the same program state.

Previous work on synchronized regular expressions \cite{SRE} used a different, operational semantics.
In that paper, there are no examples for nested synchronized stars like in $(a^x b^x)^*$ and their semantics is not clear about it.
Both interpretations $\Lang((a^x b^x)^*) = \Lang((a^\ster{1}b^\ster{1})_\uparrow^\ster{2})$ and $\Lang((a^x b^x)^*) = \Lang((a^\ster{1}b^\ster{1})^\ster{2})$ are possible. Synchronized alternations $\altern_l$ are not considered in \cite{SRE}.

\subsubsection*{Acknowledgements.} I want to thank the anonymous reviewers of the ICALP 2026 for constructive criticism on my first submission under the title \enquote{Symbolic Regular Expressions}, which had less content and almost no citations. I am very thankful to the reviewers of the present conference, ICTAC 2026, for valuable feedback, which caused a second change to the current title. I also want to thank my colleagues 
for helpful discussions.

%
%
%
\bibliographystyle{splncs04}
\bibliography{DA}

\clearpage
\appendix
\renewcommand{\theHsection}{A\arabic{section}}
\section{Proof of Theorem~\ref{lemma:inPSPACE}}\label{sec:lemma:inPSPACE}

\begin{theorem*}
	Given one pure sregex $s$ and one regular expression $\alpha$, the in\-ter\-sec\-tion-non-emptiness problem $\Lang(s)\cap\Lang(\alpha) \stackrel{?}{=} \emptyset$ is in PSPACE.
\end{theorem*}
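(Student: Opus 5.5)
The plan is to decide non-emptiness by a nondeterministic polynomial-space procedure and then apply Savitch's theorem ($\mathrm{NPSPACE}=\mathrm{PSPACE}$). First I convert $\alpha$ into an NFA $A$ with state set $Q$, where $|Q| = O(|\alpha|)$, for instance by the Thompson construction. Then $\Lang(s)\cap\Lang(\alpha)\neq\emptyset$ holds iff some concretization $\C$ has $\conc{s}$ defined and $A$ admits a run on $\conc{s}$ from an initial state to a final state.

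By Lemma~\ref{lem:injfull} I may assume that $\C$ is acyclic and injective. Then the worlds form a tree, and the evaluation of $\conc{s}$ assigns to every world $i$ a set of occurrences of subexpressions of $s$ evaluated in $i$. As in the proof of Theorem~\ref{thm:pumping}, these occurrences are pairwise distinct positions in $s$, so there are at most $|s|$ of them. Within a world, the only coupling comes from the values $\conc{\altern_l}_{(i)}$ and the common sequences $\conc{^{\ster{l}}}_{(i)}$ shared by all occurrences of label $l$ evaluated in $i$. Different worlds are independent, since each world's functions can be chosen freely.

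The central object is a \emph{world summary}. For a set $T=\{t_1,\dotsc,t_m\}$ of occurrences in $s$ and a tuple of state pairs $((p_1,q_1),\dotsc,(p_m,q_m))\in (Q\times Q)^m$, I say the tuple is realizable if some world $i$ in some concretization has every $\conc{t_k}_i$ defined and $A$ can move from $p_k$ to $q_k$ reading $\conc{t_k}_i$. Each tuple has polynomial size. I check realizability by a recursive nondeterministic procedure $\mathrm{Real}(T,\bar p,\bar q)$:
\begin{enumerate}
\item Guess $\conc{\altern_l}_{(i)}$ for every label occurring in $T$. This fixes which branch each alternation takes; if an $\emptyset$ is reached, reject.
\item Unfold concatenations down to letters, $\epsilon$, and star occurrences, guessing the intermediate NFA states. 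Letters are checked directly against the transitions of $A$. This leaves, for each label $l$, a group of star occurrences $r_1^{\ster{l}},\dotsc,r_{m'}^{\ster{l}}$ with required endpoint pairs $(p'_k,q'_k)$.
\item For each such group, decide whether there are $n\ge 0$ iterations carrying the tuple $(p'_1,\dotsc,p'_{m'})$ to $(q'_1,\dotsc,q'_{m'})$. This is reachability in a graph on $Q^{m'}$, whose edge from $\bar u$ to $\bar v$ holds iff $\mathrm{Real}(\{r_1,\dotsc,r_{m'}\},\bar u,\bar v)$ holds. The graph has exponentially many vertices, but I only store the current vertex and a step counter bounded by $|Q|^{m'}$, guessing successors one at a time. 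Each guessed edge is verified by a recursive call, and that call's space is reused afterwards.
\end{enumerate}
The top-level query is $\mathrm{Real}(\{s\},(q_0),(f))$ for some initial state $q_0$ and final state $f$.

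Correctness has two directions. Soundness: an accepting computation yields a concretization by building one fresh world per verified iteration, which is consistent because distinct worlds are independent. Completeness: any acyclic injective concretization together with an accepting run provides the guesses. For space, each recursive call strictly decreases the syntactic depth of the occurrences involved, since star bodies are proper subterms. So the recursion depth is at most $|s|$, and each frame stores only $O(|s|)$ states, labels and counters of $O(|s|\log|Q|)$ bits. The total space is therefore polynomial.

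I expect the main obstacle to be justifying that nondeterministic recursion is harmless. This works because every recursive query is used positively only: we need the existence of an edge, never its absence, so no complementation of a nondeterministic subroutine arises. A second point needing care is proving that exactly the same-labelled star occurrences in the same world must share the iteration sequence, while everything else decouples.
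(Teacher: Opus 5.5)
Your proposal is correct and follows essentially the same route as the paper: your $\mathrm{Real}(T,\bar p,\bar q)$ is the paper's \textsc{Match} (guess the alternation choices of the current world, guess intermediate NFA states, check letters, group the remaining star occurrences by label), and your counter-bounded reachability search in $Q^{m'}$ is exactly \textsc{MatchStar} with its cutoff at $|Q|^k$. The only additions are that you make explicit what the paper leaves implicit, namely the appeal to Savitch's theorem and the soundness/completeness argument for the nondeterministic recursion.
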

\begin{proof}
    We present a non-deterministic algorithm, running with polynomial space.
    First, we use Thompson's algorithm to convert the regular expression $\alpha$ into an NFA $\mathcal{A} = \{Q,\Sigma,\Delta,o,f\}$ with start state $o\in Q$ and end state $f\in Q$. The automaton $\mathcal{A}$ needs $O(n)$ space for the states $Q$ and transitions $\Delta \in Q\times (\Sigma \cup \{\epsilon\})\times Q$.

    For the pure sregex $s$, we use a (partial) concretization $\concr{}$. However, to meet the space limitations, for each recursive call, the algorithm will just keep the alternations $\conc{\altern_l}_{(i)}$ for a single current world $i$ in memory. This is done for all labels $l$ appearing in $s$.

    The algorithm consists of two functions: $\textsc{Match}((s_1,o_1,f_1),\dotsc,(s_k,o_k,f_k))$ matches simultanously all (sub)expressions $s_x$ with $x=1,\dotsc,k$ in the same world~$i$. While matching~$s_x$, the automaton runs from the respective states~$o_x$ to~$f_x$, which is done letter by letter in Line 17.
    The initial call is $\textsc{Match}((s,o,f))$.
    Since it is a non-deterministic algorithm, it succeeds if the final result is $\mathtt{true}$ for at least one execution path.

    The function $\textsc{MatchStar}(\mathit{count},i,(s_1^{\ster{l}},o_1,f_1),\dotsc,(s_k^{\ster{l}},o_k,f_k))$ specifically matches expressions with the same labeled star. After $\mathit{count}$ reaches $|Q|^k$, where $|Q|$ is the number of states of the automaton, all reachable state combinations have been non-deterministically tried out.
Hence, we can safely return $\mathtt{false}$.

    To show that the algorithm runs indeed in polynomial space, we observe that the recursion depth of $\textsc{Match}()$ is bounded by the number of stars in $s$. The tail-recursion in $\textsc{MatchStar}()$ adds no extra memory. The number $k$ of subexpressions in the same world is bounded by the length of the pure sregex~$s$. The $\mathit{count}$ is bounded by $n^k$, which can be represented with $k\cdot\mathit{log}(n)$ bits, hence also in polynomial space.

\vspace{1cm}
$\textsc{MatchStar} \colon \N \times (R \times Q \times Q)^* \to \{\mathtt{true},\mathtt{false}\}$
\begin{algorithmic}[1]
\Function{MatchStar}{$\mathit{count}$, $(s_1^{\ster{l}},o_1,f_1),\dotsc,(s_k^{\ster{l}},o_k,f_k)$}
    \If{$o_x=f_x$ \text{for all} $x\gets 1,\dotsc,k$}
        \textbf{return} $\mathtt{true}$
    \EndIf
    \If{$\emph{count} = |Q|^k$}
        \textbf{return} $\mathtt{false}$
    \EndIf
    \For {$x \gets 1,\dotsc,k$}
        \State non-deterministically choose $z_x \in Q$
    \EndFor
    \If{\Call{Match}{$(s_1,o_1,z_1),\dotsc,(s_k,o_k,z_k)$}$ = \mathtt{false}$}
        \textbf{return} $\mathtt{false}$
    \EndIf
    \State \textbf{return} \Call{MatchStar}{$\mathit{count}+1$, $(s_1^{\ster{l}},z_1,f_1),\dotsc,(s_k^{\ster{l}},z_k,f_k)$}
\EndFunction
\end{algorithmic}

\clearpage

$\textsc{Match} \colon (R \times Q \times Q)^* \to \{\mathtt{true},\mathtt{false}\}$
\begin{algorithmic}[1]
\Function{Match}{$(s_1,o_1,f_1),\dotsc,(s_k,o_k,f_k)$}
    \State let $i$ denote the current world
    \ForAll {labels $l$ of alternations $\altern_l$ appearing in $s_1,\dotsc,s_k$}
        \State non-deterministically choose $\pconc{\altern_l}_{(i)} \in \{\alternleft,\alternright\}$
    \EndFor
    \For {$x\gets 1,\dotsc,k$}
        \State $s_x \gets$ partially evaluate $s_x$ for all $\altern_l$ in $s_x$ with $\pconc{\altern_l}_{(i)}$
        \State \Comment{$s_x$ now consists of letters $\mathtt{a}\in\Sigma$ and unevaluated $r^{\ster{l}}$}
        \State \Comment{we say there is a \emph{position} before each $\mathtt{a}$ and $r^{\ster{l}}$}
        \State $m_x \gets $ number of positions in $s_x$
        \If{$m_x = 0$ \textbf{and} $o_x\neq f_x$}
            \textbf{return} $\mathtt{false}$
        \EndIf
        \State $z_{x,1} \gets o_x$
        \State $z_{x,m_x{+}1} \gets f_x$
        \For {$y\gets 2,\dotsc,m_x$}
            \State non-deterministically choose $z_{x,y} \in Q$
        \EndFor
        \For{$y\gets 1,\dotsc,m_x$}
            \If{$\mathtt{a} \in\Sigma$ on position $y$ of $s_x$}
                \If{NFA $\{Q,\Sigma,\Delta,z_y,z_{y+1}\}$ rejects $\mathtt{a}$}
                    \textbf{return} $\mathtt{false}$
                \EndIf
            \EndIf
        \EndFor
    \EndFor
    \ForAll {labels $l$ of stars $^\ster{l}$ appearing in $s_1,\dotsc,s_k$}
        \State $\mathit{args} \gets 0$
        \For{$x \gets 1,\dotsc,k$ \textbf{and} $y\gets 1,\dotsc,m_x$}
            \If{$r^{\ster{l}} \text{ on position }y$ of $s_x$}
                \State $\mathit{args} \gets \mathit{args}, (r^{\ster{l}},z_{x,y},z_{x,y+1})$
            \EndIf
        \EndFor
        \If{ \Call{MatchStar}{$\mathit{args}$}$ == \mathtt{false}$}
            \textbf{return} $\mathtt{false}$
        \EndIf
    \EndFor
    \State \textbf{return} $\mathtt{true}$%
\EndFunction
\qedhere
\end{algorithmic}

\end{proof}

\end{document}